\newtheorem{theorem}{Theorem}[section]
\newtheorem{lemma}[theorem]{Lemma}
\newtheorem{note}[theorem]{NOTE}
\journal{Knowledge-Based Systems}
\begin{document}

\begin{frontmatter}

\title{A Semantic-Rich Similarity in Heterogeneous Information Networks}
%\tnotetext[mytitlenote]{Fully documented templates are available in the elsarticle package on \href{http://www.ctan.org/tex-archive/macros/latex/contrib/elsarticle}{CTAN}.}

%% Group authors per affiliation:
\author{Yu Zhou}
\ead{peterjone85@hotmail.com}
\address{School of Software, Xidian University, Xi'an, Shaanxi, China}

\author{Jianbin Huang\footnote{Corresponding Author}}
\ead{jbhuang@xidian.edu.cn}
\address{School of Software, Xidian University, Xi'an, Shaanxi, China}

\author{Heli Sun}
\ead{hlsun@mail.xjtu.edu.cn}
\address{Computer Science and Technology, Xi'an Jiaotong University, Xi'an, Shaanxi, China}
%
%\author{Yizhou Sun}
%\ead{yzsun@cs.ucla.edu}
%\address{Computer Science, Los Angeles, California, USA}
%
%\author{Hong Cheng}
%\ead{hcheng@se.cuhk.edu.hk}
%\address{System Engineering and Engineering Management, Hong Kong, China}

%\fntext[myfootnote]{Since 1880.}

%% or include affiliations in footnotes:
%\author[mymainaddress,mysecondaryaddress]{Heli Sun}
%\ead[url]{www.elsevier.com}
%
%\author[mysecondaryaddress]{Global Customer Service\corref{mycorrespondingauthor}}
%\cortext[mycorrespondingauthor]{Corresponding author}
%\ead{support@elsevier.com}
%
%\address[mymainaddress]{1600 John F Kennedy Boulevard, Philadelphia}
%\address[mysecondaryaddress]{360 Park Avenue South, New York}

\begin{abstract}
  Measuring the similarities between objects in information networks has fundamental importance in recommendation systems, clustering and web search.
  The existing metrics depend on the meta path or meta structure specified by users.
  In this paper, we propose a stratified meta structure based similarity $SMSS$ in heterogeneous information networks.
  The stratified meta structure can be constructed automatically and capture rich semantics.
  Then, we define the commuting matrix of the stratified meta structure by virtue of the commuting matrices of meta paths and meta structures.
  As a result, $SMSS$ is defined by virtue of these commuting matrices.
  Experimental evaluations show that the proposed $SMSS$ on the whole outperforms the state-of-the-art metrics in terms of ranking and clustering.
\end{abstract}

\begin{keyword}
Heterogeneous Information Network, Similarity, Meta Path, Meta Structure, Stratified Meta Structure
\end{keyword}

\end{frontmatter}

%\linenumbers

%%%%%%%%%%%%%%%%%%%%%%%%%%%%%%%%%%%%%%%%%%%%%%%%%%%%%%%%%%%%%%%%%%%%%%%%%%%%%%%%%%%%%%%%%%%%%%%%%%%%%%%%%%%%%%%%%%%%%%%%%
\section{Introduction}\label{sec:introduction}
%\IEEEraisesectionheading{\section{Introduction}\label{sec:introduction}}
%%%%%%%%%%%%%%%%%%%%%%%%%%%%%%%%%%%%%%%%%%%%%%%%%%%%%%%%%%%%%%%%%%%%%%%%%%%%%%%%%%%%%%%%%%%%%%%%%%%%%%%%%%%%%%%%%%%%%%%%%

\begin{figure}[htb]
  \centering
  \includegraphics[width=0.7\textwidth]{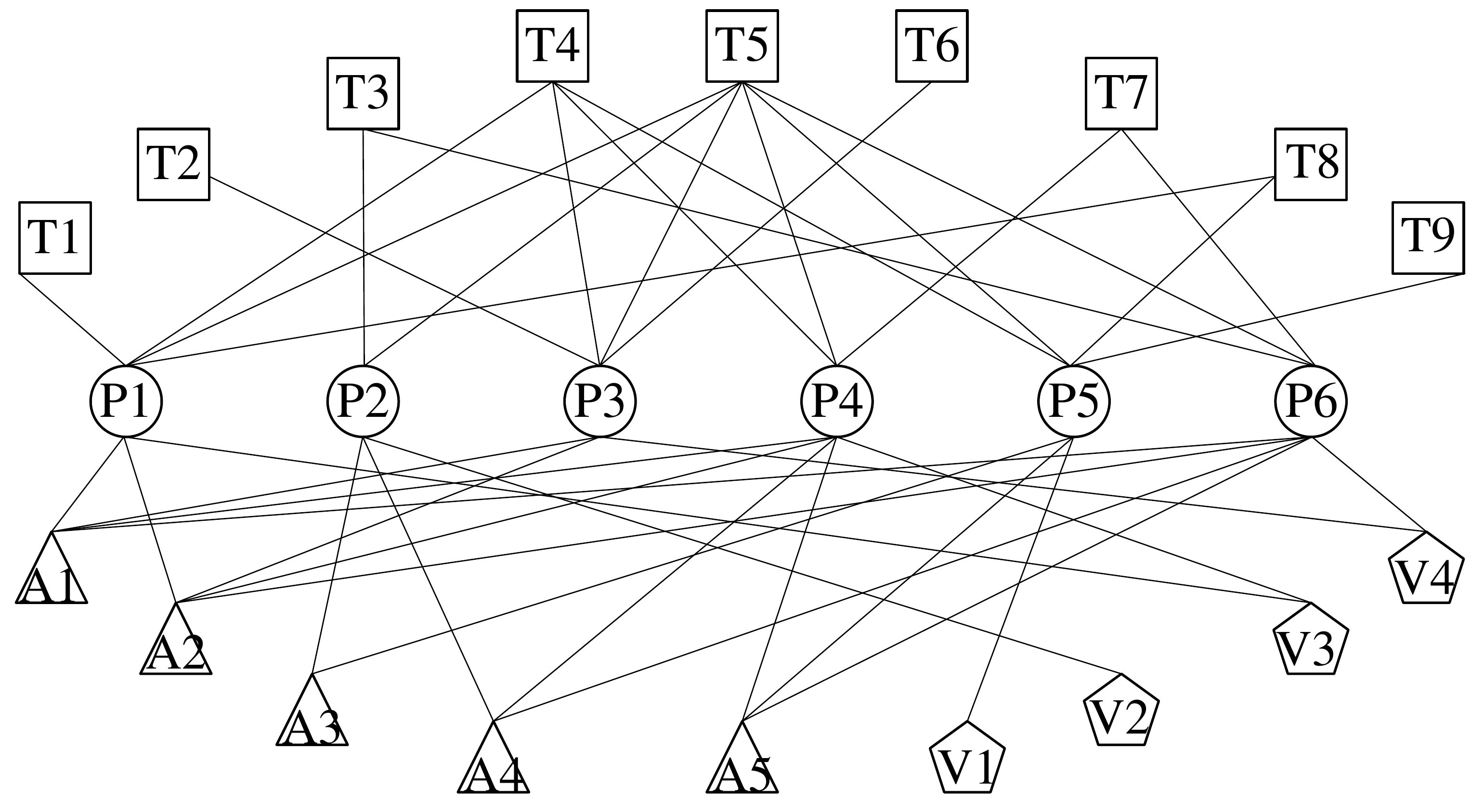}
  \caption{A Toy Bibliographic Information Network.
  T1, T2, T3, T4, T5, T6, T7, T8 and T9 respectively stand for terms `NetworkSchema', `RelationStrength', `Similarity', `Clustering', `HIN', `Attribute', `MetaPath', `Ranking', `NetworkSchema'.
  P1, P2, P3, P4, P5 and P6 respectively stand for papers `NetClust', `HeteSim', `GenClus', `PathSelClus', `HeProjI', `PathSim'.
  A1, A2, A3, A4 and A5 respectively stand for authors `Yizhou Sun', `Jiawei Han', `Chuan Shi', `Philip S. Yu'. `Xifeng Yan'.
  V1, V2, V3 and V4 respectively stand for venues `CIKM', `TKDE', `SIGKDD', `VLDB'.}
  \label{ExampleHins}
\end{figure}

Information network analysis attracts many researchers' attention in the field of data mining because many real systems,
e.g. bibliographic information database and biological systems, can be modeled as information networks.
These networks have common characteristics: they are composed of multi-typed and interconnected objects.
This kind of information networks is usually called \textbf{Heterogeneous Information Networks (HIN)}.
Fig. \ref{ExampleHins} shows a toy bibliographic information network with four actual object types \textit{Author} ($A$)
in the shape of triangles, \textit{Paper} ($P$) in the shape of circles,
\textit{Venue} ($V$) in the shape of pentagons and \textit{Term} ($T$) in the shape of squares.
The type $P$ has six instances: P:HeteSim \cite{SKHYW:2014}, P:HeProjI \cite{SWLYW:2014},
P:GenClus \cite{SAH:2012}, P:PathSelClus \cite{SNHYYY:2012}, P:PathSim \cite{SHYYW:2011}, P:NetClus \cite{SYH:2009}.
Each paper has its author(s), a venue and its related terms.
Hence, it contains three types of links: $P\leftrightarrow A$, $P\leftrightarrow V$ and $P\leftrightarrow T$.

In a \textbf{HIN}, a fundamental problem is to measure the similarities between objects using structural and semantic information.
All the off-the-shelf similarities in \textbf{HIN} are based on user-specified meta paths,
for example $PathSim$ \cite{SHYYW:2011} and Biased Path Constrained Random Walk ($BPCRW$) \cite{LC:2010b, LC:2010a}.
According to the literature \cite{HZCSML:2016}, meta paths can only capture biased and relatively simple semantics.
Therefore, the authors proposed a more complicated structure called meta structure, and defined
the meta structure based similarity using the compressed-ETree, called the Biased Structure Constrained Subgraph Expansion ($BSCSE$).
However, the meta structure needs to be specified in advance as well.

It is really very difficult for users to specify meta paths or meta structures.
For example, there are ten object types (Gene, Gene Ontology, Tissue, Chemical Compound, Side Effect,
Substructure, Chemical Ontology, Pathway, Disease, Gene Family)
and eleven link types in a complete biological information network \cite{CDJWZ:2010,FDSCSB:2016}.
Obviously, users hardly know how to choose appropriate meta paths or meta structures.
In addition, different meta paths and meta structures may have different effects on the similarities between objects.
This makes users more difficult to select appropriate meta paths or meta structures.

To alleviate users' burden, we propose an automatically-constructed schematic structure
called \textbf{S}tratified \textbf{M}eta \textbf{S}tructure (\textbf{SMS}).
It needs not to be specified in advance, and combines many meta paths and meta structure.
This ensures that
(1) Users need not to follow with interest the structure of the network schema of the input \textbf{HIN};
(2) Rich semantics can still be captured.
We are inspired by the tree-walk proposed in \cite{NPEKK:2011}.
The structure of a tree-walk is constructed by repetitively visiting nodes in the input graph.
This idea can be employed here.
As a result, we devise the stratified meta structure,
which is essentially a directed acyclic graph consisting of the object types with different layer labels.
It can be automatically constructed via repetitively visiting the object types on the network schema.
In the process of the construction, we discover the \textbf{SMS} consists of many basic substructures and recurrent substructures,
see section \ref{subsec:similaritymeasure}.
These basic substructures and recurrent substructures essentially represent specific relations.
The \textbf{SMS} as a composite structure is therefore a composite relation. This is why the \textbf{SMS} can capture rich semantics.

After obtaining the \textbf{SMS}, the next step is to formalize its rich semantics.
For meta structures, the compressed-ETree is used to formalize its semantics.
However, it cannot be used here, because \textbf{SMS} contains an infinite number of meta structures.
The semantics contained in meta paths are usually formalized by its commuting matrices.
In essence, the meta structures have the same nature as the meta paths, because they all have hierarchical structures.
So, we define commuting matrices of meta structures by virtue of cartesian product in section \ref{subsec:metapath_metastructure},
and further define commuting matrix of the \textbf{SMS} by reasonably combining the infinite number of the commuting matrices of meta structures.
The proposed metric, $SMSS$, is defined by the commuting matrix of the \textbf{SMS}.
Experimental evaluations suggest that $SMSS$ on the whole outperforms the baselines $PathSim$, $BPCRW$ and $BSCSE$ in terms of ranking quality and clustering quality.

\textbf{The main contributions} are summarized as follows.
\begin{enumerate}[1)]
  \item We propose the stratified meta structure with rich semantics, which can be constructed automatically,
    and define a stratified meta structure similarity $SMSS$ by virtue of the commuting matrix of the \textbf{SMS};
  \item We define the commuting matrices of meta structures by virtue of cartesian product, and use them to compactly re-formulate $BSCSE$;
  \item We conduct experiments for evaluating the performance of the proposed metric $SMSS$.
    The proposed metric on the whole outperforms the baselines in terms of ranking quality and clustering quality.
\end{enumerate}

The rest of the paper is organized as follows. Section \ref{sec:relatedwork} introduces related works.
Section \ref{sec:preliminaries} provides some preliminaries on \textbf{HINs}.
Section \ref{sec:deepmetastructure} introduces the definition of $SMSS$.
The experimental evaluations are introduced in section \ref{sec:expriment}. The conclusion is introduced in section \ref{sec:conclusion}.

%%%%%%%%%%%%%%%%%%%%%%%%%%%%%%%%%%%%%%%%%%%%%%%%%%%%%%%%%%%%%%%%%%%%%%%%%%%%%%%%%%%%%%%%%%%%%%%%%%%%%%%%%%%%%%%%%%%%%%%%%
\section{Related Work}\label{sec:relatedwork}
%\IEEEraisesectionheading{\section{Introduction}\label{sec:introduction}}
%%%%%%%%%%%%%%%%%%%%%%%%%%%%%%%%%%%%%%%%%%%%%%%%%%%%%%%%%%%%%%%%%%%%%%%%%%%%%%%%%%%%%%%%%%%%%%%%%%%%%%%%%%%%%%%%%%%%%%%%%

To the best of our knowledge, Sun et al. \cite{SYH:2009,SHZYCW:2009,SH:2012} first proposed the definition of the \textbf{HIN}
and studied ranking-based clustering in \textbf{HINs}.
Shi et al. \cite{SLZSY:2017} gave a comprehensive summarization of research topics on \textbf{HINs} including similarity measure, clustering,
link prediction, ranking, recommendation, information fusion and classification etc.
Article \cite{HeteClass:2017} proposed a novel meta-path based framework called HeteClass for transductive classification of target type objects.
This framework can explore the network schema of the input \textbf{HIN} and incorporate the expert's knowledge to generate a collection of meta paths.
Below, we summarize related works on similarity measures in information networks.

For similarity measures in homogeneous information networks,
literature \cite{JW:2002} proposed a general similarity measure $SimRank$ combining the link information, which thought two similar objects must relate to similar objects.
Literature \cite{JW:2003} evaluated the similarities of objects by a random walk model with restart. Article \cite{LinkPred:2017} lists many state-of-the-art similarities in homogeneous information networks:
(1) Local Approaches:
e.g. Common Neighbors ($CN$), Adamic-Adar Index ($AA$), Resource Allocation Index ($RA$),
Resource Allocation based on Common Neighbor Interactions ($RA-CNI$),
Preferential Attachment Index ($PA$), Jaccard Index ($JA$), Salton Index ($SA$), Sorensen Index ($SO$), Hub Promoted Index ($HPI$),
Hub Depressed Index ($HDI$), Local Leicht-Holme-Newman Index ($LLHN$), Individual Attraction index ($IA$), Mutual Information ($MI$),
Local Naive Bayes ($LNB$), CAR-Based Indices ($CAR$), Functional Similarity Weight ($FSW$), Local Interacting Score ($LIT$);
(2) Global Approaches:
Negated Shortest Path ($NSP$), Katz Index ($KI$), Global Leicht-Holme-Newman Index ($GLHN$), Random Walks (RA),
Random Walks with Restart ($RWR$), Flow Propagation ($FP$), Maximal Entropy Random Walk ($MERW$),
Pseudo-inverse of the Laplacian Matrix ($PLM$), Average Commute Time ($ACT$), Random Forest Kernel Index ($RFK$), The Blondel index ($BI$),
(3) Quasi-Local Approaches:
Local Path Index (HPI), Local Random Walks (LRW), Superposed Random Walks (SRW),
Third-Order Resource Allocation Based on Common Neighbor Interactions ($ORA-CNI$), FriendLink ($FL$), PropFlow Predictor ($PFP$).

For Similarity measures in heterogeneous information networks,
Sun \cite{SHYYW:2011} proposed a meta path based similarity measure in \textbf{HINs}, called $PathSim$.
Lao and Cohen \cite{LC:2010b, LC:2010a} studied the problem of measuring the entity similarity in labeled directed graphs, and defined a Biased Path Constrained Random Walk ($BPCRW$) model.
It can be applied to \textbf{HINs}. Huang et al. \cite{HZCSML:2016} proposed a similarity $BSCSE$, which can capture more complex semantics.
Shi et al. \cite{SKHYW:2014} proposed a relevance measure $HeteSim$ which can be used to evaluate the relatedness of two object with different types.
For a user-specified meta path, $HeteSim$ is based on the pairwise random walk from its two endpoints to its center.
Xiong et al. \cite{XZY:2015} studied the problem of finding the $\text{top-}k$ similar object pairs by virtue of locality sensitive hashing.
Zhu et al. \cite{SEMATCH:2017} proposed an integrated framework for the development, evaluation and application of semantic similarity for knowledge graphs
which can be viewed as complicated heterogeneous information networks. This framework included many similarity tools and allowed users to compute semantic similarities.
In the article \cite{ForwardBYW:2017}, the authors studied the similarity search problem in social and knowledge networks, and proposed a dual perspective similarity metric called Forward Backward Similarity.

%The main difference of the proposed metric, $SMSS$, from the existing metrics ($PathSim$, $BPCRW$ and $BSCSE$) is that $SMSS$ does not depend on any pre-specified meta paths or meta structures.

%%%%%%%%%%%%%%%%%%%%%%%%%%%%%%%%%%%%%%%%%%%%%%%%%%%%%%%%%%%%%%%%%%%%%%%%%%%%%%%%%%%%%%%%%%%%%%%%%%%%%%%%%%%%%%%%%%%%%%%%%
\section{Preliminaries}\label{sec:preliminaries}
%\IEEEraisesectionheading{\section{Preliminaries}\label{sec:preliminaries}}
%%%%%%%%%%%%%%%%%%%%%%%%%%%%%%%%%%%%%%%%%%%%%%%%%%%%%%%%%%%%%%%%%%%%%%%%%%%%%%%%%%%%%%%%%%%%%%%%%%%%%%%%%%%%%%%%%%%%%%%%%
In this section, we introduce some important concepts related to \textbf{HINs} including network schema in subsection \ref{subsec:HINModel}, meta paths and meta structures in \ref{subsec:metapath_metastructure}.

%%%%%%%%%%%%%%%%%%%%%%%%%%%%%%%%%%%%%%%%%%%%%%%%%%%%%%%%%%%%%%%%%%%%%%%%%%%%%%%%%%%%%%%%%%%%%%%%%%%%%%%%%%%%%%%%%%%%%%%%%
\subsection{HIN Definition}\label{subsec:HINModel}
%%%%%%%%%%%%%%%%%%%%%%%%%%%%%%%%%%%%%%%%%%%%%%%%%%%%%%%%%%%%%%%%%%%%%%%%%%%%%%%%%%%%%%%%%%%%%%%%%%%%%%%%%%%%%%%%%%%%%%%%%
As defined in article \cite{SWLYW:2014}, an \textbf{information network} is essentially a directed graph $G=(V,E,\mathcal{A},\mathcal{R})$.
$V$ and $E$ respectively denote its sets of objects and links,
and $\mathcal{A}$ and $\mathcal{R}$ respectively denote its sets of object types and link types.
Map $\phi: V\rightarrow\mathcal{A}$ denotes the object type $\phi(v)$ of object $v\in V$.
That is to say, each object in $V$ belongs to a specific object type.
Similarly, map $\psi: E\rightarrow\mathcal{R}$ represents the link type $\psi(e)$ of link $e\in E$, i.e. each link in $E$ belongs to a specific link type.
In essence, $\psi(e)$ contains some semantic because it is a relation from the source object type to the target object type.
If two links belong to the same link type, they share the same starting object type as well as the ending object type.
$G$ is called a \textbf{Heterogeneous Information Network} if $|\mathcal{A}|>1$ or $|\mathcal{R}|>1$.
Otherwise, it is called a homogeneous information network.

\begin{figure}[htb]
  \centering
  \includegraphics[width=0.7\textwidth]{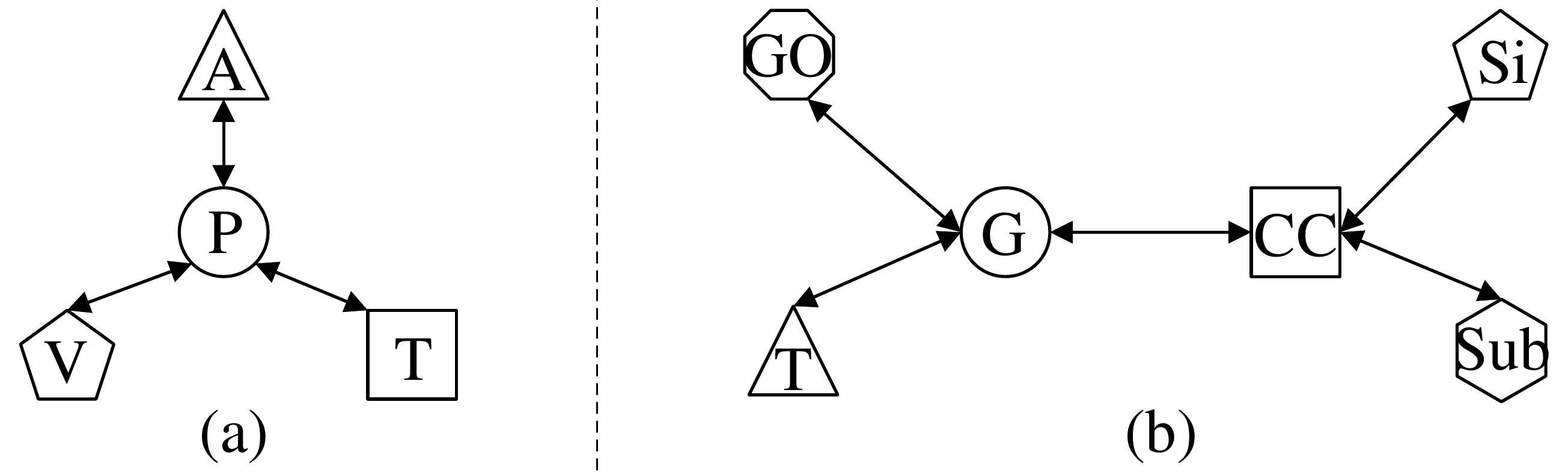}
  \caption{(a) Bibliographic network schema. (b) Biological network schema.}
  \label{exampleNetworkSchema}
\end{figure}

For each \textbf{HIN}, there is a meta-level description for it, called its \textbf{network schema}.
Specifically, the \textbf{Network schema} $\mathcal{T}_G=(\mathcal{A},\mathcal{R})$ \cite{SWLYW:2014} of $G$ is a directed graph consisting of the object types in $\mathcal{A}$
and the link types in $\mathcal{R}$. Fig. \ref{exampleNetworkSchema}(a) shows the network schema for the \textbf{HIN}
in Fig. \ref{ExampleHins}.
In this paper, we also use a biological information network consisting of six object types, i.e. \textit{Gene} ($G$), \textit{Tissue} ($T$), \textit{GeneOntology} ($GO$),
\textit{ChemicalCompound} ($CC$), \textit{Substructure} ($Sub$) and \textit{SideEffect} ($Si$), and five link types, i.e.
\textit{GO$\leftrightarrow$G}, \textit{T$\leftrightarrow$G}, \textit{G$\leftrightarrow$CC}, \textit{CC$\leftrightarrow$Si}, \textit{CC$\leftrightarrow$Sub}.
Its network schema is shown in \ref{exampleNetworkSchema}(b).

%%%%%%%%%%%%%%%%%%%%%%%%%%%%%%%%%%%%%%%%%%%%%%%%%%%%%%%%%%%%%%%%%%%%%%%%%%%%%%%%%%%%%%%%%%%%%%%%%%%%%%%%%%%%%%%%%%%%%%%%%
\subsection{Schematic Structures}\label{subsec:metapath_metastructure}
%%%%%%%%%%%%%%%%%%%%%%%%%%%%%%%%%%%%%%%%%%%%%%%%%%%%%%%%%%%%%%%%%%%%%%%%%%%%%%%%%%%%%%%%%%%%%%%%%%%%%%%%%%%%%%%%%%%%%%%%%
Up to now, there are two kinds of schematic structures (meta paths and meta structures) for the network schema of the \textbf{HIN}. All of them carries some semantics.
It is noteworthy that these two kinds of schematic structures must be specified by users when using them to measure the similarities between objects.

\textbf{Meta path} \cite{SHYYW:2011} is essentially an alternate sequence of object types and link types,
i.e. $O_1\xrightarrow{R_1}O_2\xrightarrow{R_2}\cdots\xrightarrow{R_{l-2}}O_{l-1}\xrightarrow{R_{l-1}}O_l$, $O_i\in\mathcal{A}, i=1,\cdots,l$ and $R_j\in\mathcal{R}, j=1,\cdots,l-1$.
Note that $R_j$ is a link type starting from  $O_j$ to $O_{j+1}, j=1,\cdots,l-1$.
In essence, the meta path contains some composite semantic because it represents a composite relation $R_1\circ R_2\circ\cdots\circ R_{l-1}$.
Unless stated otherwise, the meta path $\mathcal{P}=O_1\xrightarrow{R_1}O_2\xrightarrow{R_2}\cdots\xrightarrow{R_{l-2}}O_{l-1}\xrightarrow{R_{l-1}}O_l$
can be compactly denoted as $(O_1,O_2,\cdots,O_{l-1},O_l)$.
There are some useful concepts related to the meta path in literature \cite{SHYYW:2011}, i.e. \textbf{length} of $\mathcal{P}$, \textbf{path instance} following $\mathcal{P}$,
\textbf{reverse meta path} of $\mathcal{P}$, \textbf{symmetric meta path} and \textbf{commuting matrix} $\mathcal{M}_{\mathcal{P}}$ of $\mathcal{P}$.
For example, Fig. \ref{exampleMetaPath}(a,b,c) show three meta paths in the network schema shown in Fig. \ref{exampleNetworkSchema}(a).
They can be compactly denoted as $(A,P,A)$, $(A,P,V,P,A)$ and $(A,P,T,P,A)$. They can express different semantics.
$(A,P,A)$ expresses ``Two authors cooperate on a paper.''
$(A,P,V,P,A)$ express ``Two authors publish their papers in the same venue.''
$(A,P,T,P,A)$ express ``Two authors publish their papers containing the same terms.''

\begin{figure}[htb]
  \centering
  \includegraphics[width=0.7\textwidth]{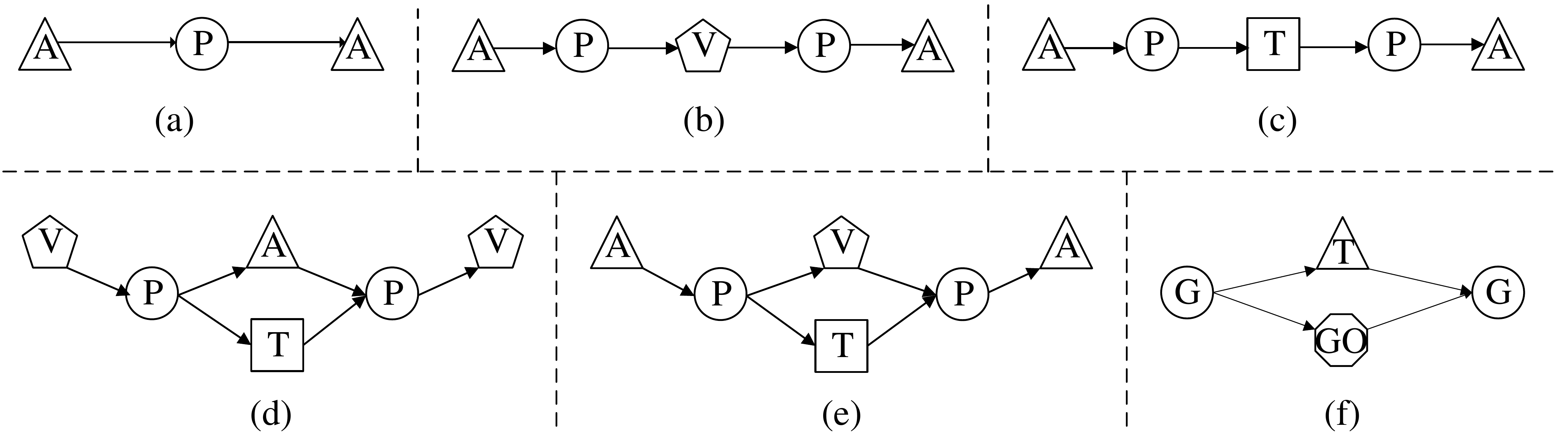}
  \caption{Some Meta Paths and meta structures.}
  \label{exampleMetaPath}
\end{figure}

\textbf{Meta structure} \cite{HZCSML:2016} $\mathcal{S}=(\mathcal{V}_{\mathcal{S}},\mathcal{E}_{\mathcal{S}},T_s,T_t)$
is essentially a directed acyclic graph
with a single source object type $T_s$ and a single target object type $T_t$. $\mathcal{V}_{\mathcal{S}}$
is a set of object types, and $\mathcal{E}_{\mathcal{S}}$ is a set of link types.
Fig. \ref{exampleMetaPath}(d,e) show two kinds of meta structures for the network schema shown in Fig. \ref{exampleNetworkSchema}(a).
All of them can be compactly denoted as $(V,P,(A,T),P,V)$ and $(A,P,(V,T),P,A)$.
Fig. \ref{exampleMetaPath}(f) shows a meta structure for the network schema shown in Fig. \ref{exampleNetworkSchema}(b).
It can be compactly denoted $(G,(GO,T),G)$.
Meta Structure $(V,P,(A,T),P,V)$ expresses the more complicated semantic
``Two venues publish papers both containing the same terms and wrote by the same authors.''
Meta Structure $(A,P,(V,T),P,A)$ expresses the more complicated semantic
``Two authors write their papers both containing the same terms and in the same venue.''

Given a meta structure $\mathcal{S}$, we sort its object types with regard to the topological order. Suppose its height is equal to $h_1$.
Let $L_i$ denote the set of object types on the layer $i$, and $C_{L_i}$ denote the cartesian product of the set of objects belonging to different types in $L_i$, $i=0,1,\cdots,h_1-1$.
The relation matrix $W_{L_iL_{i+1}}$ from $C_{L_i}$ to $C_{L_{i+1}}$ is defined as the one whose entries $(s,t)$ are equal to 1
if the $s$-th element $C_{L_h}(s)$ of $C_{L_h}$ is adjacent to the $t$-th one $C_{L_{h+1}}(t)$ of $C_{L_{h+1}}$ in $G$, otherwise 0.
$C_{L_h}(s)$ and $C_{L_{h+1}}(t)$ are adjacent if and only if for any $u\in C_{L_h}(s)$ and $v\in C_{L_{h+1}}(t)$, $u$ and $v$ are adjacent in $G$ if $\phi(u)$ and $\phi(v)$ are adjacent in $\mathcal{T}_G$.
The commuting matrix of $\mathcal{S}$ is defined as
\begin{displaymath}
\mathcal{M}_{\mathcal{S}}=\prod_{i=0}^{h_1-1}W_{L_iL_{i+1}}.
\end{displaymath}
Each entry in $\mathcal{M}_{\mathcal{S}}$ represents the number of instances following $\mathcal{S}$.
The commuting matrix of its reverse is equal to $\mathcal{M}_{\mathcal{S}}^T$..

\begin{figure}[htb]
  \centering
  \includegraphics[width=1.0\textwidth]{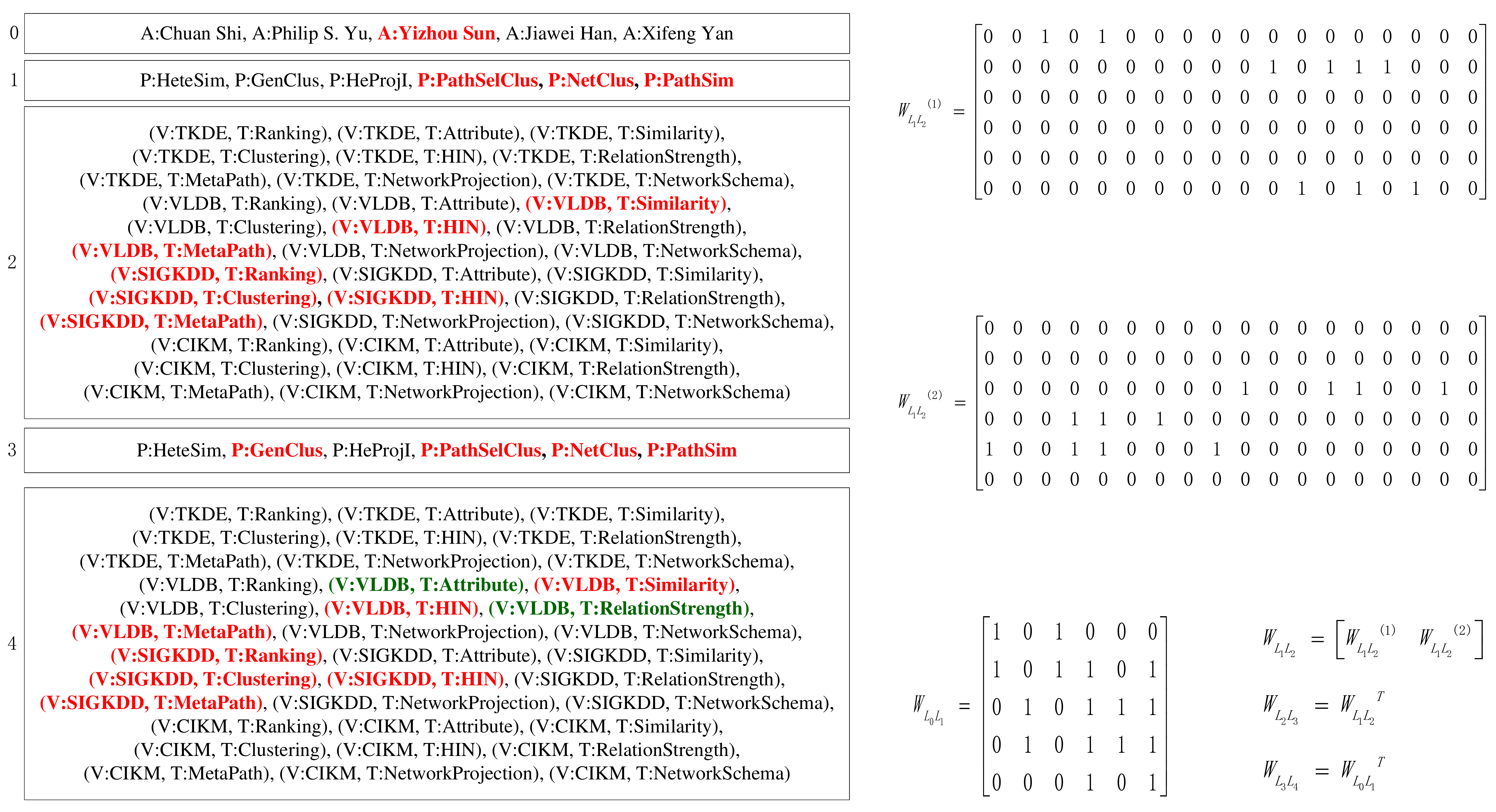}
  \caption{Illustration of $C_{L_i}, i=0,1,2,3,4$ (left hand) and $W_{L_iL_{i+1}},i=0,1,2,3$ (right hand). Because of the space limitation, $W_{L_1L_2}$ is partitioned into two blocks $W_{L_1L_2}^{(1)}$ and $W_{L_1L_2}^{(2)}$.}
  \label{exampleCommutingMatrix}
\end{figure}

Take the \textbf{HIN} shown in Fig. \ref{ExampleHins} as an example. We compute the commuting matrix of the meta structure shown in Fig. \ref{exampleMetaPath}(b).
It has five layers $L_0=\{A\}$, $L_1=\{P\}$, $L_2=\{V,T\}$, $L_3=\{P\}$ and $L_4=\{A\}$.
The $i$-th box on the left-hand side of Fig. \ref{exampleCommutingMatrix} shows the cartesian product $C_{L_i}, i=0,1,2,3,h_1=4$.
Then, we can easily obtain the relation matrices $W_{L_0L_1}$, $W_{L_1L_2}$, $W_{L_2L_3}=W_{L_1L_2}^T$, $W_{L_3L_4}=W_{L_0L_1}^T$ on the right-hand side of Fig. \ref{exampleCommutingMatrix}.
According to the fact that $W_{L_1L_2}(1,3)=1$, we know P:HeteSim is adjacent to (V:TKDE,T:Similarity). In fact, P:HeteSim is published on the V:TKDE and contains the term T:Similarity.
Similarly, $W_{L_1L_2}(1,1)=0$ implies that P:HeteSim is not adjacent to (V:TKDE,T:Ranking). According to the \textbf{HIN} shown in Fig. \ref{ExampleHins}, we know P:HeteSim does not contain the term T:Ranking.
As a result,
\begin{displaymath}
\begin{array}{ll}
\mathcal{M}_{\mathcal{S}} & =W_{L_0L_1}\times W_{L_1L_2}\times W_{L_2L_3}\times W_{L_3L_4} \\
                          & =W_{L_0L_1}\times W_{L_1L_2}\times W_{L_1L_2}^T\times W_{L_0L_1}^T \\
                          & =
\begin{bmatrix}
6 & 6 & 0 & 0 & 0 \\
6 & 18 & 14 & 14 & 8 \\
0 & 14 & 20 & 20 & 11 \\
0 & 14 & 20 & 20 & 11 \\
0 & 8 & 11 & 11 &9
\end{bmatrix}
.
\end{array}
\end{displaymath}

For a given meta structure, its $BSCSE$ with $\alpha=1$ can be expressed by its commuting matrix as well. The following lemma \ref{extension_to_ms} describes this conclusion.
Throughout this paper, we use $\bar{X}=U_X^{-1}X$ to denote its normalized version, where $U_X$ is a diagonal matrix whose nonzero entries are equal to the row sum of $X$.
\begin{lemma}\label{extension_to_ms}
 Given a meta structure $\mathcal{S}$, suppose that $L_i$ denotes the set of object types on its $i$-th layer. When $\alpha=1$,
\begin{displaymath}
BSCSE(o_s,o_t|\mathcal{S},h,\alpha)=\mathcal{M}_{\mathcal{S}}^h(o_s,o_t),
\end{displaymath}
where $\mathcal{M}_{\mathcal{S}}^h=\bar{W}_{L_0L1}\times\bar{W}_{L_1L_2}\times\cdots\times\bar{W}_{L_{h-1}L_{h}}$.
\end{lemma}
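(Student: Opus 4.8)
The plan is to unfold the recursive, compressed-ETree definition of $BSCSE$ from \cite{HZCSML:2016}, observe that the choice $\alpha=1$ annihilates every bias term, and then identify the resulting layer-by-layer subgraph expansion with the matrix product of the normalized relation matrices $\bar{W}_{L_iL_{i+1}}$. I would argue by induction on the height $h$ of $\mathcal{S}$.

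First I would isolate what a single expansion step does. The key observation is that one step from layer $L_i$ to layer $L_{i+1}$ is governed exactly by the adjacency encoded in the cartesian products $C_{L_i}$ and $C_{L_{i+1}}$: by the definition of $W_{L_iL_{i+1}}$ in Section \ref{subsec:metapath_metastructure}, a compound node $C_{L_i}(s)$ can be expanded into $C_{L_{i+1}}(t)$ if and only if $W_{L_iL_{i+1}}(s,t)=1$, because the adjacency requirement forces every constituent object of $C_{L_i}(s)$ to be adjacent in $G$ to every constituent of $C_{L_{i+1}}(t)$. Consequently the admissible one-step expansions out of $C_{L_i}(s)$ are exactly the nonzero entries in row $s$, and their number is the corresponding diagonal entry of $U_{W_{L_iL_{i+1}}}$.

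Next I would specialize to $\alpha=1$. In the biased subgraph expansion, the portion of the weight that does not continue the structural walk is scaled by $1-\alpha$, so at $\alpha=1$ it vanishes and the process becomes a pure structure-constrained walk that spreads the weight at $C_{L_i}(s)$ uniformly over its admissible successors. This uniform redistribution is precisely left multiplication by $\bar{W}_{L_iL_{i+1}}=U_{W_{L_iL_{i+1}}}^{-1}W_{L_iL_{i+1}}$, whose $(s,t)$ entry is the transition weight from $C_{L_i}(s)$ to $C_{L_{i+1}}(t)$.

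Finally I would run the induction. For $h=1$ the single expansion gives $BSCSE(o_s,o_t|\mathcal{S},1,1)=\bar{W}_{L_0L_1}(o_s,o_t)$, the base case. For the inductive step I would decompose an $h$-layer expansion into a first step $L_0\to L_1$ followed by an $(h-1)$-layer expansion rooted at $L_1$; the inductive hypothesis rewrites the tail as $\bar{W}_{L_1L_2}\times\cdots\times\bar{W}_{L_{h-1}L_h}$, and summing over the intermediate compound node (total probability) is exactly left multiplication by $\bar{W}_{L_0L_1}$. Associativity of matrix multiplication then collapses the composition into $\bar{W}_{L_0L_1}\times\cdots\times\bar{W}_{L_{h-1}L_h}=\mathcal{M}_{\mathcal{S}}^h$, and reading off entry $(o_s,o_t)$ proves the claim. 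The main obstacle I anticipate is matching the per-step normalization of the compressed-ETree formulation to the row normalization $U^{-1}W$ at layers that contain several object types, where a walk ``node'' is a tuple of the cartesian product rather than a single object; I would need to verify that the compression of shared sub-expansions neither duplicates nor rescales weight, so that the uniform-redistribution picture underlying $\bar{W}$ holds verbatim.
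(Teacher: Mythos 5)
Your proposal is sound and shares the paper's core mechanism -- induction plus the identification of a single expansion step at $\alpha=1$ with multiplication by the row-normalized relation matrix $\bar{W}_{L_iL_{i+1}}$ -- but it runs the induction in the opposite direction. The paper fixes the source $o_s$ and inducts on the expansion depth, appending the \emph{last} layer: it writes $\mathcal{M}_{\mathcal{S}}^{h+1}=\mathcal{M}_{\mathcal{S}}^{h}\bar{W}_{L_hL_{h+1}}$, treats $\mathcal{M}_{\mathcal{S}}^{h}$ as the row vector of values $BSCSE(o_s,o_k|\mathcal{S},h,\alpha)$, and invokes the identity $BSCSE(o_s,o_j|\mathcal{S},h+1,\alpha)=\sum_k BSCSE(o_s,o_k|\mathcal{S},h,\alpha)\,y_{kj}$. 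You instead peel off the \emph{first} step $L_0\to L_1$ and apply the inductive hypothesis to the tail rooted at $L_1$. Your direction is actually more faithful to the recursive definition of $BSCSE$, which unfolds from the source side, whereas the paper's last-step identity is itself a consequence of that recursion rather than the recursion verbatim. The cost of your direction is that the inductive hypothesis, as the lemma states it, does not literally apply to the tail: the tail is rooted at a compound node $g\in C_{L_1}$, possibly a tuple spanning several object types, which is not a meta structure with a single source type. You must therefore strengthen the statement to ``for every partial instance $g$ at layer $i$, $BSCSE(g,i|\mathcal{S},o_t)$ equals the $(g,o_t)$ entry of $\bar{W}_{L_iL_{i+1}}\times\cdots\times\bar{W}_{L_{h-1}L_h}$'' and prove it by backward induction on $h-i$; this is exactly the obstacle you flagged at the end, and it does go through because the compressed-ETree merges only identical sub-expansions without rescaling weight.

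One genuine misstatement to correct: $\alpha$ is not a $(1-\alpha)$ teleport or mixing weight. It enters $BSCSE$ as the exponent in the normalizer $|\sigma(g,i|\mathcal{S},G)|^{\alpha}$, so at $\alpha=1$ each step divides by the exact number of admissible successors, which yields the uniform redistribution you use; at $\alpha=1$ nothing ``vanishes'' -- rather the full out-degree normalization is switched on. Since the lemma fixes $\alpha=1$ and the operative consequence (each successor receives weight $1/d$, matching the nonzero entries of $\bar{W}_{L_iL_{i+1}}$, exactly as the paper's base case computes via $w(o_{1i})=\frac{1}{d_s}$) is correct, this does not break your argument, but the mechanism as you describe it is wrong and should be restated.
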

\begin{proof}
We prove the lemma by induction on $h\geq 1$.

\textbf{Initial Step}. Obviously, $C_{L_0}=\{o_s\}$. When $h=1$, $\mathcal{M}_{\mathcal{S}}^1=\bar{W}_{L_0L_1}$. Assume there are $d_s$ different object tuples in $C_{L_1}$ adjacent to $o_s$,
denoted as $o_{11},o_{12},\cdots,o_{1d_s}$.
According to the definition of $w(v)$ \cite{HZCSML:2016},
$w(o_{1i})=\frac{1}{d_s}, i=1,2,\cdots,d_s$. Obviously, $\bar{W}_{L_0L_1}(o_s,o_{1i})=\frac{1}{d_s}, i=1,2,\cdots,d_s$.
Therefore, we have $BSCSE(o_s,o_{1i}|\mathcal{S},1,\alpha)=\mathcal{M}_{\mathcal{S}}^1(o_s,o_{1i}), i=1,2,\cdots,d_s$.

\textbf{Inductive Step}. Assume the conclusion holds for $h$. Below, we prove it also holds for $h+1$.
Obviously,
\begin{equation}\label{xxxxx}
\begin{array}{ll}
\mathcal{M}_{\mathcal{S}}^{h+1} & =\mathcal{M}_{\mathcal{S}}^h\bar{W}_{L_hL_{h+1}} \\
         & =\left[\sum_{k=1}^lx_ky_{k1}, \cdots, \sum_{k=1}^lx_ky_{km}\right],
\end{array}
\end{equation}
where $\mathcal{M}_{\mathcal{S}}^h=\left[x_1,x_2,\cdots,x_l\right]$, and $\bar{W}_{L_hL_{h+1}}=(y_{ij})_{l\times m}$.
Note in particular that $x_i=BSCSE(o_s,o_i|\mathcal{S},h,\alpha)$, where $i=1,\cdots,l$, and $y_{ij}$, where $i=1,\cdots,l,j=1,\cdots,m$, is equal to either 0 or $\frac{1}{|\{y_{ij}\neq 0|j=1,\cdots,m\}|}$.
According to the definition of  $BSCSE$ in literature \cite{HZCSML:2016},
\begin{equation}\label{yyyyy}
BSCSE(o_s,o_j|\mathcal{S},h+1,\alpha)=\sum_{k=1}^lx_ky_{kj}, j=1,\cdots,m.
\end{equation}
Combining formulas \ref{xxxxx} and \ref{yyyyy}, we have
\begin{displaymath}
BSCSE(o_s,o_j|\mathcal{S},h+1,\alpha)=\mathcal{M}_{\mathcal{S}}^{h+1}(o_s,o_j),
\end{displaymath}
where $j=1,\cdots,m$. The conclusion holds for $h+1$.
\end{proof}

In this paper, we aim to define a similarity measure in \textbf{HINs}, which does not depend on any pre-specified schematic structures.
This is a reasonable restriction because specifying meta paths or meta structures is a cumbersome job.

%%%%%%%%%%%%%%%%%%%%%%%%%%%%%%%%%%%%%%%%%%%%%%%%%%%%%%%%%%%%%%%%%%%%%%%%%%%%%%%%%%%%%%%%%%%%%%%%%%%%%%%%%%%%%%%%%%%%%%%%%
\section{Stratified Meta Structure Based Similarity}\label{sec:deepmetastructure}
%\IEEEraisesectionheading{\section{Algorithm Description}\label{sec:algorithm}}
%%%%%%%%%%%%%%%%%%%%%%%%%%%%%%%%%%%%%%%%%%%%%%%%%%%%%%%%%%%%%%%%%%%%%%%%%%%%%%%%%%%%%%%%%%%%%%%%%%%%%%%%%%%%%%%%%%%%%%%%%

In this section, we  define the stratified meta structure based similarity measure in \textbf{HINs}.
Firstly, we give the architecture of the stratified meta structure in section \ref{subsec:metastructure}.
Secondly, we formally define the similarity based on the stratified meta structure in section \ref{subsec:similaritymeasure}.
At last, we describe the pseudo-code of computing the similarity.

%%%%%%%%%%%%%%%%%%%%%%%%%%%%%%%%%%%%%%%%%%%%%%%%%%%%%%%%%%%%%%%%%%%%%%%%%%%%%%%%%%%%%%%%%%%%%%%%%%%%%%%%%%%%%%%%%%%%%%%%%
\subsection{Stratified Meta Structure}\label{subsec:metastructure}
%%%%%%%%%%%%%%%%%%%%%%%%%%%%%%%%%%%%%%%%%%%%%%%%%%%%%%%%%%%%%%%%%%%%%%%%%%%%%%%%%%%%%%%%%%%%%%%%%%%%%%%%%%%%%%%%%%%%%%%%%

A \textbf{Stratified Meta Structure} is essentially a directed acyclic graph consisting of object types with different layer labels.
Its salient advantage is that it can be automatically constructed by repetitively visiting object types in the process of traversing the network schema.
Given a \textbf{HIN} $G$, we first extract its network schema $\mathcal{T}_G$, and then select a source object type and a target object type.
Unless stated otherwise, the source object type is the same as the target one.
\textbf{The construction rule of SMS $\mathcal{D}_G$ of $G$} is described as follows.
The source object type is placed on the 0-th layer. The object types on the layer $l=1,2,\cdots,+\infty$ are composed of the neighbors of the object types on the layer $l-1$ in $\mathcal{T}_G$.
The adjacent object types are linked by an arrow pointing from the $(l-1)$-th layer down to the $l$-th layer.
Note in particular that once we get the target object type on the layer $l\geq 1$, delete its outgoing links, i.e. the ones starting from it down to the object types on the $(l+1)$-th layer.
Repeating the above process, we obtain the \textbf{SMS} $\mathcal{D}_G$.

\begin{figure}[htb]
  \centering
  \includegraphics[width=0.7\textwidth]{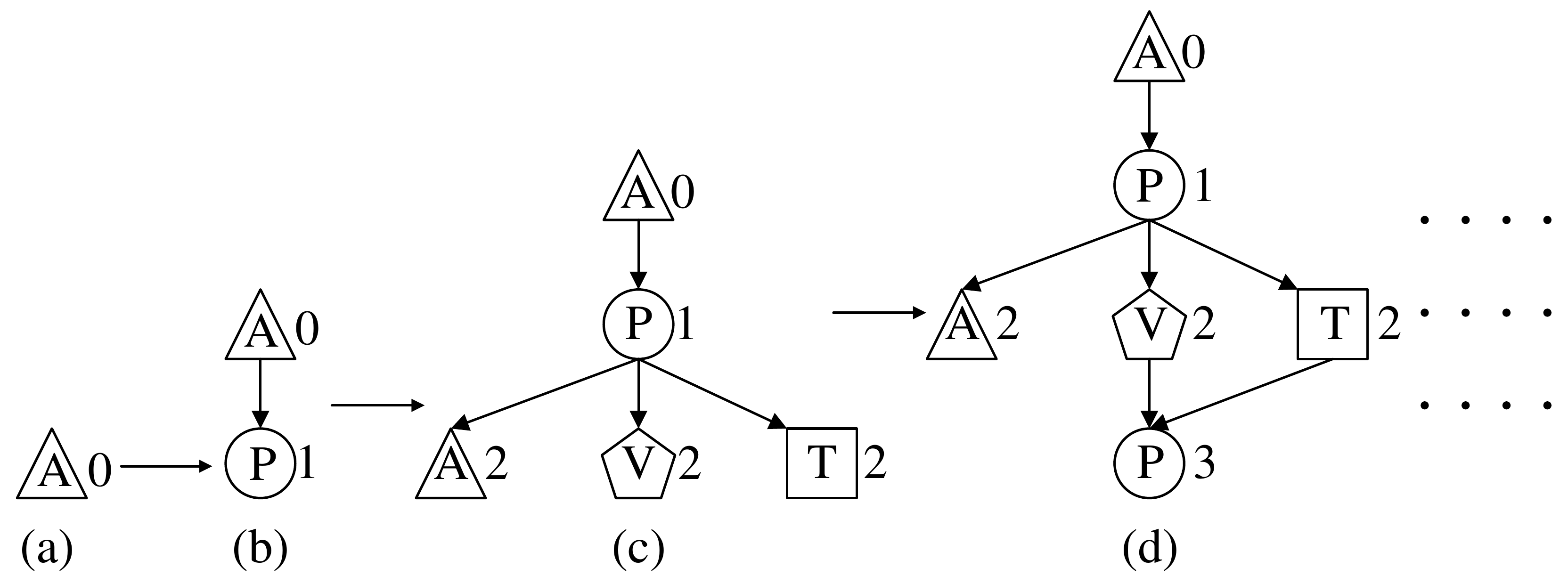}
  \caption{The construction of the \textbf{SMS} of the toy bibliographic information network. The numbers near nodes stand for their layer labels}
  \label{construction_DeepMS}
\end{figure}

\begin{figure}[htb]
  \centering
  \includegraphics[width=0.7\textwidth]{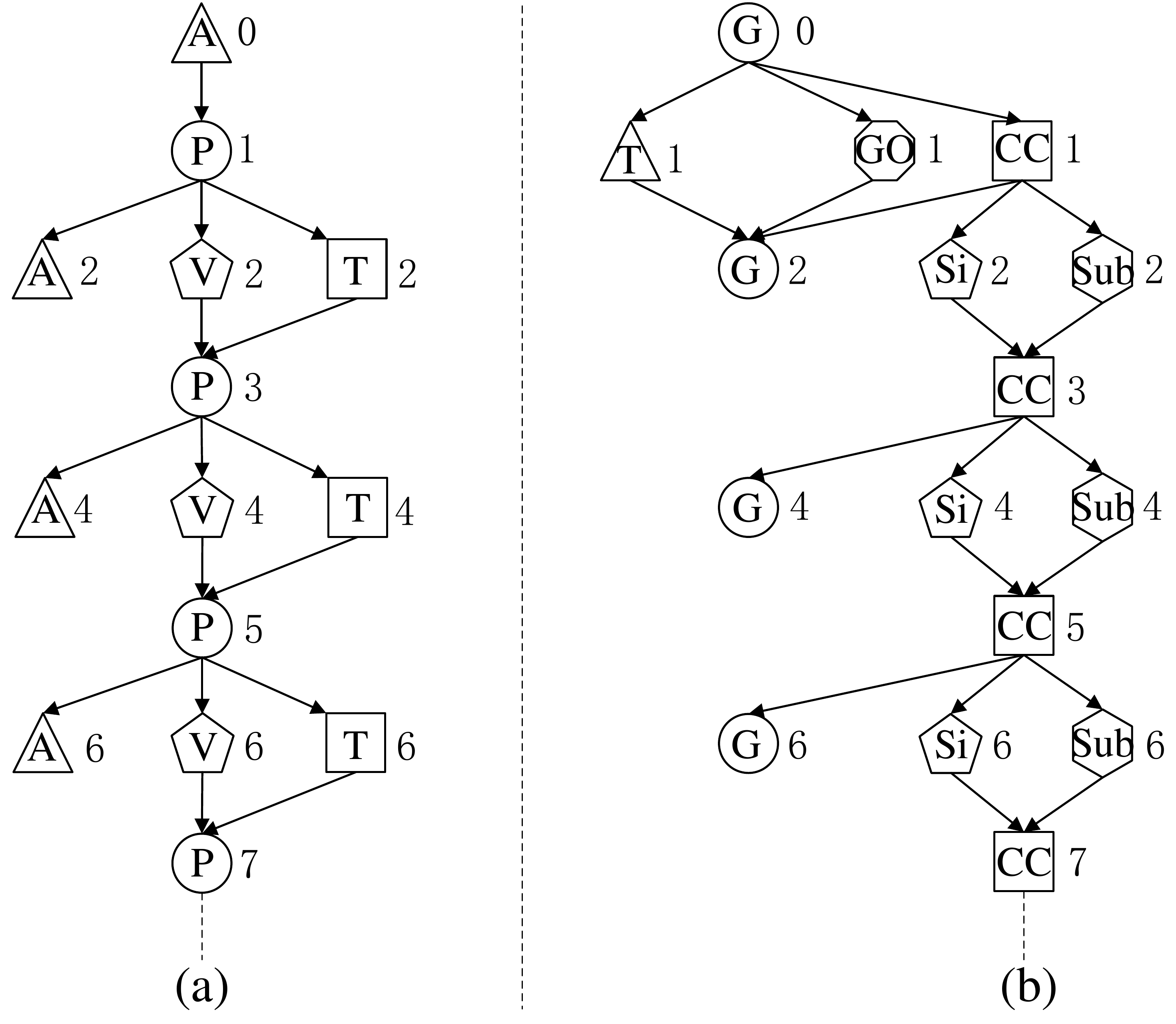}
  \caption{Two kinds of \textbf{SMS}. The numbers near nodes stand for the layer labels. In this figure, the $i$-th layer, $i>8$, are omitted because of space limitation.}
  \label{ExampleDeepMetaStructure}
\end{figure}

Fig. \ref{ExampleDeepMetaStructure}(a) shows the \textbf{SMS} of the network schema shown in Fig. \ref{exampleNetworkSchema}(a).
It can be constructed as shown in Fig. \ref{construction_DeepMS}.
$A$ is both the source and target object type. Firstly, $A$ labelled as $A\_0$ is placed on the 0-th layer, see Fig. \ref{construction_DeepMS}(a).
$P$ is placed on the 1-st layer and labelled as $P\_1$, because $P$ is the only neighbor of $A$ in the network schema shown in Fig. \ref{exampleNetworkSchema}(a), see Fig. \ref{construction_DeepMS}(b).
$A$, $V$ and $T$, respectively labelled as $A\_2$, $V\_2$ and $T\_2$, are placed on the 3-rd layer, because they are the neighbors of $P$, see Fig. \ref{construction_DeepMS}(c).
Similarly, $P$, labelled as $P\_3$, is again placed on the 4-th layer, because it is the neighbor of both $V$ and $T$, see Fig. \ref{construction_DeepMS}(d). At this time, $P$ is visited again.
Note in particular that the link from $A\_2$ down to $P\_3$ is deleted, because $A$ is the target object type.
Repeating the above procedure, we obtain the \textbf{SMS} shown in Fig. \ref{ExampleDeepMetaStructure}(a).
Fig. \ref{ExampleDeepMetaStructure}(b) shows the \textbf{SMS} of the network schema shown in Fig. \ref{exampleNetworkSchema}(b). Gene is both the source and target object type.
It can be constructed as similarly as \ref{ExampleDeepMetaStructure}(a). It is worth noting that $T\_1$ and $GO\_1$ are only placed on the 1-st layer,
because their degrees in the network schema shown in Fig. \ref{exampleNetworkSchema}(b) are equal to 1.

Below, we give some properties of \textbf{SMS} via lemma \ref{properties_dms}.
Given a \textbf{SMS} $\mathcal{D}_G$, we sort its object types in the topological order.
Let $L_h$ denote the set of object types with the layer label $h$ except the target object type.
Let $C_{L_h}$ denote the cartesian product of the set of objects belonging to different types in $L_h$, $h=0,1,\cdots,+\infty$.
\textbf{The relation matrix} $\mathbf{W_{L_hL_{h+1}}}$ \textbf{from} $\mathbf{C_{L_h}}$ \textbf{to} $\mathbf{C_{L_{h+1}}}$ is defined as similarly as
the commuting matrices of meta structures defined in section \ref{subsec:metapath_metastructure}.
A substructure consisting of three layers $h',h'+1,h'+2$ in $\mathcal{D}_G$
is recurrent if and only if $L_h=L_{h+2}$ for $h=h',h'+2,\cdots,+\infty$.
Let $h_0$ denote the height of the spanning tree of $\mathcal{T}_G$ yielded by Breadth-First Search (\textbf{BFS}).
Without loss of generality, we assume the source object type in the network schema does not contain self-loops.
If the source object type has a self-loop, we assign two roles to it: target object type and intermediate object type.
The first role is to treat it as the target object type, and the second is to treat it as an non-source and non-target object type.

\begin{note}
\label{column_row_removing}
In the process of computing $W_{L_iL_{i+1}}$, we need to visit all the elements in $C_{L_i}$.
In practice, there are many all-zero columns in $W_{L_iL_{i+1}}$.
We should remove all the all-zero columns in $W_{L_iL_{i+1}}$ and the corresponding rows in $W_{L_{i+1}L_{i+2}}$.
Removing the $r$-th column of $W_{L_{i}L_{i+1}}$ implies there are no links between the $r$-th object in $C_{L_{i+1}}$ and any object in $C_{L_i}$.
Therefore, it is unnecessary to consider the links between it and the objects in $C_{L_{i+2}}$.
\end{note}

\begin{note}
\label{locality}
In the process of computing $W_{L_iL_{i+1}}$, the objects except the source $o_s$ can be removed from $C_{L_0}$.
At this time, we only need to consider the elements adjacent to $o_s$ in $C_{L_{1}}$, and the others are removed from $C_{L_{1}}$.
In general, suppose $F$ is the set of considered elements in $C_{L_i}$. That implies the elements in $C_{L_i}-F$ are removed from $C_{L_i}$.
In $C_{L_{i+1}}$, we only need to consider the elements in $\cup_{f\in F}N_{L_{i+1}}(f)$ where $N_{L_{i+1}}(f)$ denotes the set of elements in $C_{L_{i+1}}$ adjacent to $f$.
The others are removed from $C_{L_{i+1}}$.
\end{note}

\begin{lemma}\label{properties_dms}
Assume the source object type is the same as the target one, and the source object type in $\mathcal{T}_G$ does not contain self-loops.
The \textbf{SMS} $\mathcal{D}_G$ has the properties:
\begin{enumerate}[1)]
  \item The target object type lies on the $(2i)$-th layer, $i\in\mathbb{N}^+$.
  \item If we walk up from the target object type on the layer $h=2,4,\cdots$ to the source object type along the parents of object types,
    then we obtain a symmetric meta structure, denoted as $\mathcal{P}_h=\left(L_0L_1\cdots L_{\frac{h}{2}}\cdots L_1L_0\right)$.
  \item For any $h\geq h_0$, $L_h=L_{h+2}$.
  \item The substructure consisting of the object types except the target object type on the layers $h_0,h_0+1,h_0+2$ always recurrently appear in the \textbf{SMS}.
  \item For $h=2,4,\cdots,\infty$, the meta structure $\mathcal{P}_h$ contains $n(h)$ recurrent structures, where
    \begin{eqnarray}\label{n_h}
    n(h)=
    \begin{cases}
    \frac{h}{2}-h_0 & h\geq 2h_0 \\
    0 & otherwise
    \end{cases}
    .
    \end{eqnarray}
\end{enumerate}
\end{lemma}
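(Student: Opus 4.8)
The plan is to reduce all five statements to a single combinatorial description of the layers of $\mathcal{D}_G$ in terms of walks on the network schema. First I would record the effect of the deletion rule: since the source type $s=T_s=T_t$ has its outgoing links removed on every layer $l\ge 1$, it can occur in a layer only as a terminal, never as an intermediate type. Writing $H$ for the (undirected) network schema, $H'=H\setminus\{s\}$, and $S_1=N_H(s)$ for the set of neighbors of the source, one then checks directly from the construction rule that $L_h=N_{H'}^{\,h-1}(S_1)$ for $h\ge 1$; that is, $L_h$ is exactly the set of non-source types reachable from the seed set $S_1$ by a walk of length $h-1$ in $H'$, and $T_t$ appears on layer $h$ precisely when some seed is joined to a seed by a walk of length $h-2$ in $H'$. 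This reformulation is the engine for everything that follows.

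For property (1) I would use that the schema is bipartite (its object types split into two classes, as in both networks of Fig.~\ref{exampleNetworkSchema}), so that all seeds $S_1=N_H(s)$ lie in the class opposite to $s$ and every walk between two seeds inside $H'$ has even length. By the reformulation, $T_t$ occurs on layer $h$ only when $h-2$ is even, giving $h\in\{2,4,\dots\}$. The delicate point here is that bipartiteness (equivalently, the absence of an odd closed walk through $s$) is exactly what forces the parity, so I would make this hypothesis explicit: a schema with an odd cycle through the source would place $T_t$ on an odd layer.

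Property (3) is the crux. I would show that the exact-length reachability sets $N_{H'}^{\,\ell}(S_1)$ stabilise with period $2$ for large $\ell$, and then pin the threshold to $h_0$. The period-$2$ behaviour is the eventual periodicity of ``reachable by a walk of prescribed length'' in a bipartite graph: the monotonicity $N_{H'}^{\,\ell}(S_1)\subseteq N_{H'}^{\,\ell+2}(S_1)$, obtained by padding a walk with a back-and-forth step at a persistent vertex, together with finiteness of $\mathcal{A}$, forces stabilisation. The hard part is that the threshold is exactly $h_0$, the BFS height of $\mathcal{T}_G$: I would argue that every persistent type is reached within $h_0-1$ steps in $H'$, while the transient types (such as the degree-one types $T,GO$ of Fig.~\ref{ExampleDeepMetaStructure}(b), which become isolated once $s$ is deleted) have all dropped out by layer $h_0$, so that $L_h=L_{h+2}$ holds precisely for $h\ge h_0$. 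Statement (4) is then immediate: taking $h'=h_0$ in the definition of a recurrent substructure, property (3) gives $L_h=L_{h+2}$ for all $h=h_0,h_0+2,\dots$, so the three-layer block on layers $h_0,h_0+1,h_0+2$ recurs.

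For property (2) I would describe $\mathcal{P}_h$ as the sub-DAG of all ancestors of the target occurrence on layer $h$. Its depth-$k$ layer consists of the types both reachable from $s$ in $k$ steps and able to reach $T_t$ in $h-k$ steps; by the undirected reformulation the second condition is again ``reachable from $s$ in $h-k$ steps'', so the depth-$k$ layer equals $N_{H'}^{\,k-1}(S_1)\cap N_{H'}^{\,h-k-1}(S_1)$, an expression symmetric under $k\leftrightarrow h-k$, which yields the palindrome at once. Using the eventual monotonicity from (3), for $k\le h/2$ this intersection collapses to $L_k$, giving $\mathcal{P}_h=(L_0L_1\cdots L_{h/2}\cdots L_1L_0)$. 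Finally, for (5) I would count the recurrent three-layer blocks inside this symmetric structure: a block with top layer $h'$ is recurrent exactly when $h'\ge h_0$ by (3), and on the descending half such blocks occur for $h'=h_0,h_0+1,\dots,h/2-1$, which is $h/2-h_0$ of them when $h\ge 2h_0$ and none otherwise, matching $n(h)$.
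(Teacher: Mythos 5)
Your proposal is correct in outline but takes a genuinely different route from the paper's proof, and in several places it is more careful. The paper argues each property directly from the construction rule: property 1 is declared obvious, property 2 is read off as symmetry of the walk-up, property 3 is a double inclusion (any $t\in L_h$ ``must be adjacent to an object type on the layer $h+1$'', hence reappears in $L_{h+2}$, ``similarly'' for the converse), and property 5 is an induction on even $h$ adding one recurrent block per step. You instead reduce everything to a single engine, the identity $L_h=N_{H'}^{\,h-1}(S_1)$ for exact-length reachability in the punctured schema, and derive parity, stabilisation, symmetry and the count from it. This buys three things the paper's proof lacks. First, it exposes the hidden hypothesis behind property 1: the paper's ``obviously'' is valid only because both schemas of Fig.~\ref{exampleNetworkSchema} are trees, hence bipartite; with an odd closed walk through the source (a triangle $s$--$B$--$C$--$s$, say) the target also lands on odd layers, so your explicit bipartiteness condition is a genuine correction. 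Second, your padding argument $N^{\ell}\subseteq N^{\ell+2}$ together with your attention to transient types repairs the real weakness in the paper's proof of property 3: a type whose only neighbour is the source (e.g.\ $T$ and $GO$ in Fig.~\ref{ExampleDeepMetaStructure}(b)) is adjacent on the next layer only to the target, whose outgoing links are deleted, so the paper's adjacency claim fails at $h=1$ and the restriction $h\geq h_0$ does real work that the paper never invokes; the backward inclusion $L_{h+2}\subseteq L_h$, dismissed with ``similarly'', is in fact the nontrivial stabilisation direction that your monotone-chain argument actually proves. Third, your formula $N_{H'}^{\,k-1}(S_1)\cap N_{H'}^{\,h-k-1}(S_1)$ for the depth-$k$ layer of $\mathcal{P}_h$ yields the palindrome of property 2 by a one-line symmetry where the paper only asserts it. What the paper's route buys is brevity, at the cost of the gaps above.

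Two caveats on your side. (i) The claimed collapse of the intersection to $L_k$ for all $k\le h/2$ fails at $k=1$ when pendant seeds exist: in the biological network, $\mathcal{P}_4=(G,CC,(Si,Sub),CC,G)$ has depth-$1$ layer $\{CC\}\neq L_1=\{T,GO,CC\}$. The lemma's own notation $\left(L_0L_1\cdots L_{\frac{h}{2}}\cdots L_1L_0\right)$ shares this defect, which the paper silently repairs via the set $L_1'$ in lemma \ref{compute_M_SI}, so your uncollapsed intersection formula is actually the more accurate statement of property 2 and you should keep it. (ii) You assert stabilisation holds ``precisely'' for $h\geq h_0$, which overstates the lemma: it can begin earlier (for the path schema $s$--$B$--$C$--$D$ one has $L_2=L_4$ although $h_0=3$), and only sufficiency is claimed. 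Moreover you leave the threshold, which you rightly call the hard part, as intent rather than proof; it can be closed with your own machinery in two lines: a type at BFS distance $d\leq h_0$ from the source is reached along a shortest path avoiding $s$, hence lies in $L_d$ and, by padding, in every $L_{d+2j}$, so no new types can enter any layer beyond $h_0$ and the increasing parity chains are constant from $h_0$ on (this also shows the degenerate star schema, where $h_0=1$ and padding is unavailable at the bottom layer, must be excluded).
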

\begin{proof}
1) The conclusion holds obviously according to the construction rule of $\mathcal{D}_G$ and our assumptions.

2) According to property 1, the target object type with different layer labels lies on the even number layer. Thus, even is the height of the meta structure obtained by walking up from the target object type on layer $h$.
According to the construction criteria of the \textbf{SMS}, the meta structure is symmetric with respect to the layer $\frac{h}{2}$.

3) For any $t\in L_h, h\geq h_0$, it must be adjacent to an object type on the layer $h+1$. Obviously, $t\in L_{h+2}$ according to the construction rule of $\mathcal{D}_G$.
Thus, we have $L_h\subseteq L_{h+2}$. Similarly, $L_{h+2}\subseteq L_h$. Therefore, $L_h=L_{h+2}$.

4) This property obviously holds according to property 3.

5) We prove the lemma by induction on $h$.

\textbf{Initial Step.} When $h\leq 2h_0$, we obviously have $n(h)=0$ according to the construction rule of $\mathcal{D}_G$.

\textbf{Inductive Step.} Assume $n(h)=\frac{h}{2}-h_0$ for the layer $h>2h_0$. Below, we prove that the conclusion holds for $h+2$.
When $h_0\%2=0$, we obtain a new recurrent structure consisting of $L_{h-3},L_{h-2},L_{h-1}$.
When $h_0\%2\neq 0$, we obtain a new recurrent structure consisting of $L_{h-2},L_{h-1},L_{h}$.
Therefore, we have
\begin{displaymath}
\begin{array}{rl}
n(h) & =\frac{h}{2}-h_0+1 \\
  & =\frac{h+2}{2}-h_0.
\end{array}
\end{displaymath}
The conclusion holds.
\end{proof}

\begin{figure}[htb]
  \centering
  \includegraphics[width=0.7\textwidth]{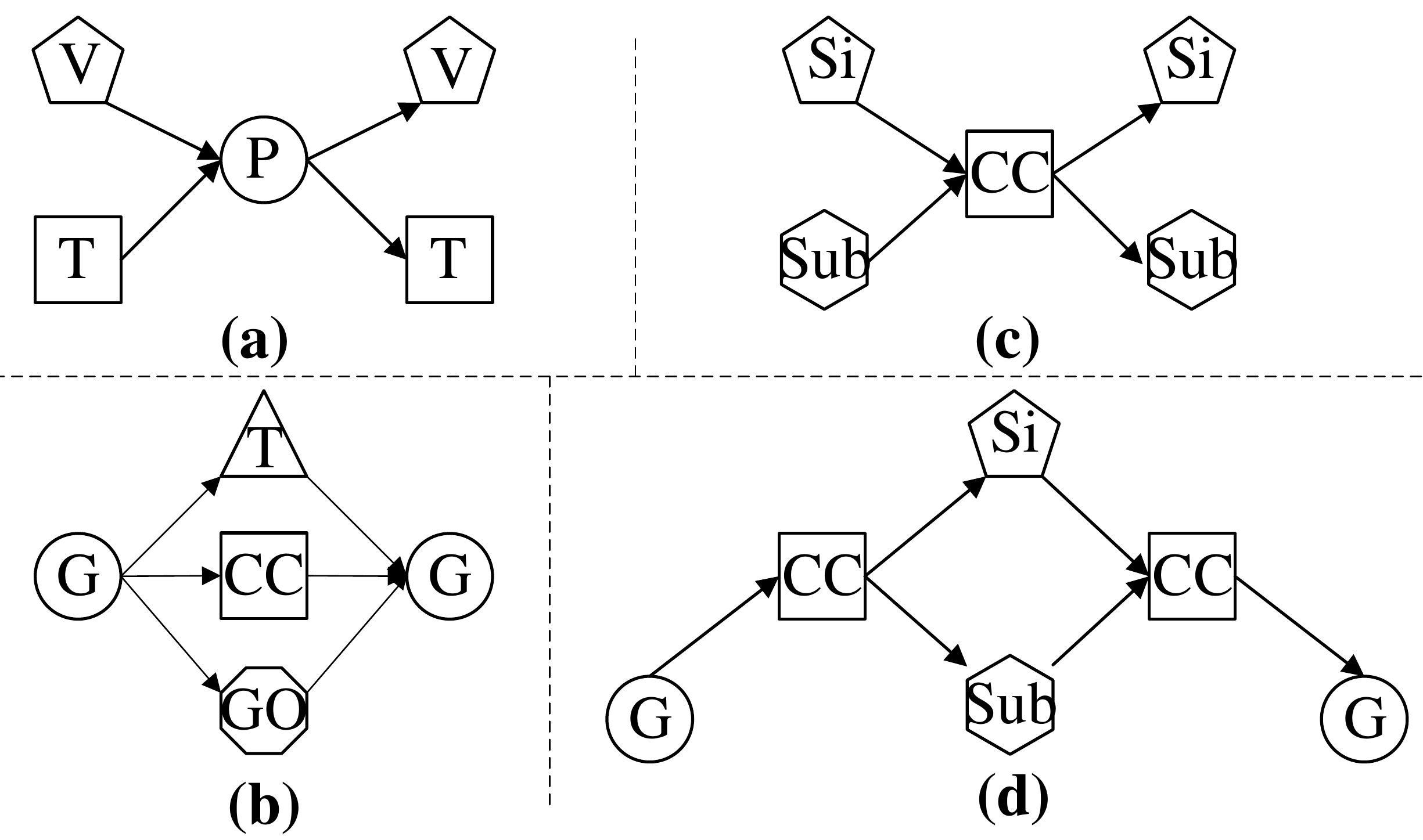}
  \caption{Recurrent structures and meta structures.}
  \label{ExampleMetaStructure}
\end{figure}

According to property 2 in lemma \ref{properties_dms}, \textbf{SMS} is essentially composed of an infinite number of meta paths and meta structures.
For example, the \textbf{SMS} shown in Fig. \ref{ExampleDeepMetaStructure}(a) can be obtained
by combining the meta path shown in Fig. \ref{exampleMetaPath}(a), the meta structure shown in Fig. \ref{exampleMetaPath}(b) and
the others with one or more recurrent substructures shown in Fig. \ref{ExampleMetaStructure}(a).
the \textbf{SMS} shown in Fig. \ref{ExampleDeepMetaStructure}(b) can be obtained by combining the meta structures shown in Fig. \ref{ExampleMetaStructure}(b,d)
and the others with one or more recurrent substructures shown in Fig. \ref{ExampleMetaStructure}(c).
It is noteworthy that the meta structure shown in Fig. \ref{ExampleMetaStructure} can be compactly denoted as $(G,CC,(Si,Sub),CC,G)$.

%%%%%%%%%%%%%%%%%%%%%%%%%%%%%%%%%%%%%%%%%%%%%%%%%%%%%%%%%%%%%%%%%%%%%%%%%%%%%%%%%%%%%%%%%%%%%%%%%%%%%%%%%%%%%%%%%%%%%%%%%
\subsection{Similarity}\label{subsec:similaritymeasure}
%%%%%%%%%%%%%%%%%%%%%%%%%%%%%%%%%%%%%%%%%%%%%%%%%%%%%%%%%%%%%%%%%%%%%%%%%%%%%%%%%%%%%%%%%%%%%%%%%%%%%%%%%%%%%%%%%%%%%%%%%

\begin{figure}[htb]
  \centering
  \includegraphics[width=0.7\textwidth]{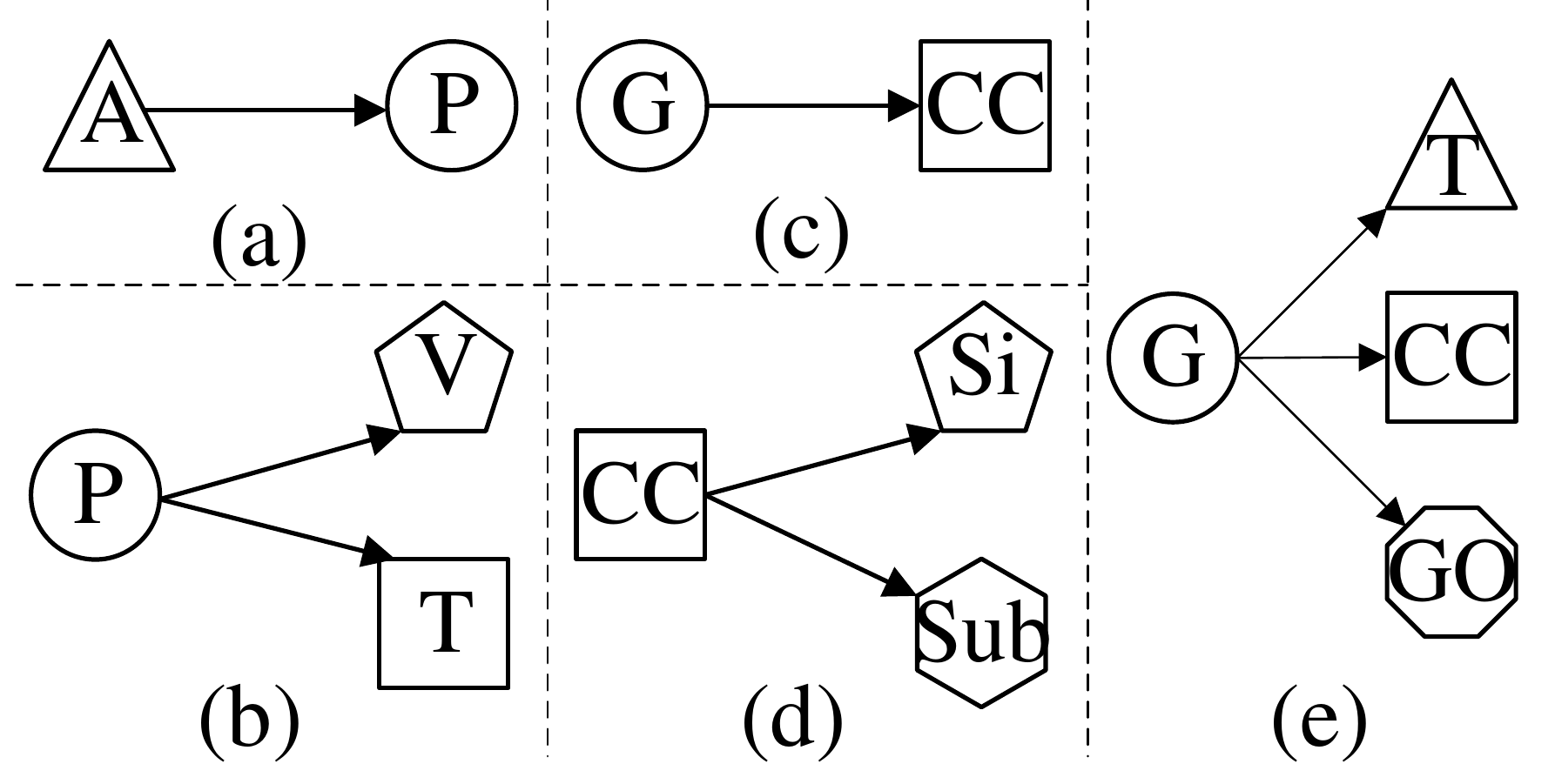}
  \caption{Illustration of basic substructures.}
  \label{basicSubstructures}
\end{figure}

Now, we define the stratified meta structure based similarity by virtue of the commuting matrices of meta paths and meta structures.
The \textbf{SMS} is essentially composed of a recurrent substructure, several basic substructures and their reverses.
The basic substructures are bipartite graphs consisting of the object types in $L_h$ and $L_{h+1}$, $h=0,\cdots,h_0-1$.
The recurrent substructure consists of $L_{h_0},L_{h_0+1},L_{h_0+2}$.
The symmetric meta structures obtained by walking up from the target object type on the layer $h=2,4,\cdots,2h_0$ to the source object type consist of the basic substructures and its reverses.
The symmetric ones obtained by walking from the target object type on the layer $h=2h_0+2,2h_0+4,\cdots,+\infty$ to the source object type consist of one or more recurrent substructures,
the basic substructures and its reverses.
For example, the \textbf{SMS} shown in Fig. \ref{ExampleDeepMetaStructure}(a) can be obtained by combining the recurrent substructure shown in Fig. \ref{ExampleMetaStructure}(a)
and two basic substructures shown in Fig. \ref{basicSubstructures}(a,b).
The \textbf{SMS} shown in Fig. \ref{ExampleDeepMetaStructure}(b) can be obtained by combining the recurrent substructure shown in Fig. \ref{ExampleMetaStructure}(c)
and three basic substructures shown in Fig. \ref{basicSubstructures}(c,d,e).

The \textbf{commuting matrix of a stratified meta structure} is formally defined as the summation of
the commuting matrices of meta paths and meta structures.
Let $\mathcal{M}_{\mathcal{D}}$ denote the commuting matrix of \textbf{SMS} $\mathcal{D}$, and
$\mathcal{S}_h$ denote the meta structure (possibly meta path), which is obtained by walking up
from the target object type on the layer $h$ to the source object type, $h=2, 4,\cdots,\infty$.
Therefore, $\mathcal{M}_{\mathcal{D}}=\sum_{i=1}^{+\infty}\mathcal{M}_{\mathcal{S}_{2i}}$.

As stated previously, $W_{L_hL_{h+1}}$ denotes the relation matrix from $C_{L_h}$ to $C_{L_{h+1}}$.
For the basic substructure consisting of the object types on the layers $h$ and $h+1$,
its relation matrix is just equal to $W_{L_hL_{h+1}}$, $h=0,1,\cdots,h_0-1$.
For the recurrent substructure, its relation matrix is equal to $W_{L_{h_0}L_{h_0+1}}W_{L_{h_0}L_{h_0+1}}^T$
according to property 3 in lemma \ref{properties_dms}.
Below, we show how to compute $\mathcal{M}_{\mathcal{S}_{h}}, h=2,4,\cdots,+\infty$.

\begin{lemma}\label{compute_M_SI}
For any $h=2,4,\cdots,+\infty$, $\mathcal{M}_{\mathcal{S}_{h}}$ can be computed as follows.
Let $R=W_{L_{h_0}L_{h_0+1}}W_{L_{h_0}L_{h_0+1}}^T$.
\begin{enumerate}[1)]
  \item If the degree of the source object type in $\mathcal{T}_G$ is equal to 1, then we have
    \begin{eqnarray}\label{no_degree_1_even}
      \mathcal{M}_{\mathcal{S}_h}=
      \begin{cases}
      U_hU_h^T & h<2h_0 \\
      U_{2h_0}R^{n(h)}U_{2h_0}^T & h\geq 2h_0
      \end{cases}
      ,
    \end{eqnarray}
    where $U_h=\prod_{i=0}^{\frac{h}{2}-1}W_{L_iL_{i+1}}$
  \item If the degree of the source object type in $\mathcal{T}_G$ is larger than 1, then let $L_1'$ denote the set of the object types with degree larger than 1 in the neighbors of the source object type,
    and let $X=W_{L_0L_1'}W_{L_1'L_2}$. When $2<h<2h_0$, let $B_h=X\prod_{i=2}^{\frac{h}{2}-1}W_{L_iL_{i+1}}$.
    We have
    \begin{eqnarray}\label{degree_1_even}
      \mathcal{M}_{\mathcal{S}_h}=
      \begin{cases}
        W_{L_0L_1}W_{L_0L_1}^T & h=2 \\
        B_hB_h^T & 2<h<2h_0 \\
        B_{2h_0}R^{n(h)}B_{2h_0}^T & h\geq 2h_0.
      \end{cases}
    \end{eqnarray}
\end{enumerate}
\end{lemma}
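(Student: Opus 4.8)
The plan is to exploit the symmetry of $\mathcal{S}_h$ recorded in property 2 of Lemma \ref{properties_dms} to factor $\mathcal{M}_{\mathcal{S}_h}$ as a ``descent'' matrix times its transpose, and then to collapse the recurrent middle portion into a power of $R$ using properties 3 and 5 of the same lemma. Both parts of the statement (degree $1$ and degree $>1$) will follow the same two-step scheme, differing only in how the top of the descent is treated.

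First I would use property 2, which gives $\mathcal{S}_h=\left(L_0L_1\cdots L_{\frac{h}{2}}\cdots L_1L_0\right)$, symmetric about the middle layer $L_{\frac{h}{2}}$. Applying the definition of the commuting matrix of a meta structure from Section \ref{subsec:metapath_metastructure}, together with the fact stated there that the relation matrix of a reversed substructure is the transpose of the original, the ``ascent'' half is exactly the transpose of the ``descent'' half. Hence $\mathcal{M}_{\mathcal{S}_h}=U_hU_h^T$ with $U_h=\prod_{i=0}^{\frac{h}{2}-1}W_{L_iL_{i+1}}$ the descent from layer $0$ to the middle. For $h<2h_0$ this is already the first branch of (\ref{no_degree_1_even}).

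Next, for $h\geq 2h_0$ I would split $U_h=U_{2h_0}M$, where $U_{2h_0}=\prod_{i=0}^{h_0-1}W_{L_iL_{i+1}}$ passes through the $h_0$ basic substructures and $M=\prod_{i=h_0}^{\frac{h}{2}-1}W_{L_iL_{i+1}}$ is the middle product of $n(h)=\frac{h}{2}-h_0$ factors (property 5). By property 3 we have $L_i=L_{i+2}$ for $i\geq h_0$, so writing $W=W_{L_{h_0}L_{h_0+1}}$ the factors of $M$ alternate $W,W^T,W,W^T,\ldots$. Since $R=WW^T$ is symmetric, a short computation gives $MM^T=(WW^T)^{n(h)}=R^{n(h)}$, valid for both parities of $n(h)$. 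Substituting yields $\mathcal{M}_{\mathcal{S}_h}=U_{2h_0}R^{n(h)}U_{2h_0}^T$, the second branch of (\ref{no_degree_1_even}).

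Finally, for the degree $>1$ case I would note that the only change is at the top of the descent: a neighbor of the source with degree $1$ in $\mathcal{T}_G$ is a dead end, since its single schema-neighbor is the target, whose incoming link has been deleted, so it can appear only in the length-two walk. Thus for $h=2$ every neighbor contributes and $\mathcal{M}_{\mathcal{S}_2}=W_{L_0L_1}W_{L_0L_1}^T$, whereas for $h\geq 4$ the descent must pass through $L_1'$, forcing its first two steps to combine into $X=W_{L_0L_1'}W_{L_1'L_2}$ and the descent to become $B_h$; repeating the symmetric factorization and the same telescoping of the middle product then produces the three branches of (\ref{degree_1_even}). I expect the main obstacle to be the bookkeeping in this telescoping step, namely correctly reading off the alternation of $W$ and $W^T$ forced by $L_i=L_{i+2}$ and confirming that $MM^T=R^{n(h)}$ holds uniformly in the parity of $n(h)$; the restriction to $L_1'$ also needs the separate, careful argument that degree-$1$ schema neighbors cannot be extended beyond the first layer.
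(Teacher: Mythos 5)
Your proposal is correct, and it arrives at exactly the factorizations in equations (\ref{no_degree_1_even}) and (\ref{degree_1_even}), but by a genuinely different route: the paper proves Lemma \ref{compute_M_SI} by induction on $h$, while you derive the closed form directly. In the paper's inductive step, passing from $h$ to $h+2$ appends one factor $W_{L_{h/2}L_{(h+2)/2}}$ to the descent matrix (written $A_h$ there, your $U_h$) when $h<2h_0$, and one whole recurrent block --- hence one extra factor of $R$ --- when $h\geq 2h_0$, reading the meta structure as $\left(L_0\cdots L_{h_0-1}(L_{h_0}L_{h_0+1}L_{h_0})^{n(h)}L_{h_0-1}\cdots L_0\right)$; case 2 is dismissed with ``similar.'' You instead invoke property 2 of Lemma \ref{properties_dms} once to get $\mathcal{M}_{\mathcal{S}_h}=U_hU_h^T$, split $U_h=U_{2h_0}M$, and collapse $MM^T=R^{n(h)}$ via the alternation $W,W^T,W,\dots$ forced by $L_i=L_{i+2}$, checking both parities of $n(h)$ explicitly --- a subtlety the paper's proof never confronts, since grouping the middle into complete blocks $(L_{h_0}L_{h_0+1}L_{h_0})$ silently absorbs the odd case, where the middle layer is $L_{h_0+1}$ and $M$ ends in a dangling $W$. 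The two computations are equivalent; what your version buys is a self-contained, non-recursive derivation whose bookkeeping is transparent, plus an actual argument for case 2, which the paper omits: your observation that a degree-one neighbor of the source can appear only in $\mathcal{S}_2$, so that the descent for $h\geq 4$ is confined to $L_1'$ and the first two steps merge into $X=W_{L_0L_1'}W_{L_1'L_2}$, is precisely what justifies the three branches of (\ref{degree_1_even}). One small slip in that step: the construction rule deletes the target type's \emph{outgoing} links, not its incoming one --- the link from the degree-one type down to the target on layer $2$ survives (otherwise the $h=2$ branch $W_{L_0L_1}W_{L_0L_1}^T$, which uses all of $L_1$, would be wrong); the degree-one type is a dead end because the target it reaches is a sink, not because its incoming link is removed. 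This misstatement does not affect your conclusion or the validity of the proof.
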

\begin{proof}
For case 1, we prove it by induction on $h$. Case 2 is similar.

\textbf{Initial Step}. When $h=2$, The obtained meta structure (possibly meta path) consists of $\left(L_0L_1L_0\right)$.
Obviously, $\mathcal{M}_{\mathcal{S}_2}=W_{L_0L_1}W_{L_0L_1}^T=A_2A_2^T$

\textbf{Inductive Step}. Assume the conclusion holds for $h$, and the meta structure for $h$ is $\left(L_0L_1\cdots L_{\frac{h}{2}}\cdots L_1L_0\right)$. Now, we prove the conclusion also holds for $h+2$.
According to property 3 in lemma \ref{properties_dms}, the meta structure for $h+2$ is $\left(L_0L_1\cdots L_{\frac{h}{2}}L_{\frac{h+2}{2}}L_{\frac{h}{2}}\cdots L_1L_0\right)$.
When $h<2h_0$, $A_{h+2}=A_{h}\times W_{L_{\frac{h}{2}}L_{\frac{h+2}{2}}}$.
Thus, we have
\begin{displaymath}
\begin{array}{lll}
\mathcal{M}_{\mathcal{S}_{h+2}} & = & \left(\prod_{i=0}^{\frac{h+2}{2}-1}W_{L_iL_{i+1}}\right)\left(\prod_{i=0}^{\frac{h+2}{2}-1}W_{L_iL_{i+1}}\right)^T \\
                                & = & A_{h+2}A_{h+2}^T.
\end{array}
\end{displaymath}
When $h\geq 2h_0$, there are $n(h)$ recurrent substructures according to property 6 in lemma \ref{properties_dms}.
Thereby, the obtained meta structure for $h$ can be denoted as
\begin{displaymath}
\left(L_0\cdots L_{h_0-1}(L_{h_0}L_{h_0+1}L_{h_0})^{n(h)}L_{h_0-1}\cdots L_0\right).
\end{displaymath}
The obtained meta structure for $h+2$ is
\begin{displaymath}
\left(L_0\cdots L_{h_0-1}(L_{h_0}L_{h_0+1}L_{h_0})^{n(h+2)}L_{h_0-1}\cdots L_0\right).
\end{displaymath}
Therefore,
\begin{displaymath}
\begin{array}{lll}
  \mathcal{M}_{\mathcal{S}_{h+2}} & =A_{2h_0}\left(W_{L_{h_0}L_{h_0+1}}W_{L_{h_0}L_{h_0+1}}^T\right)^{n(h)+1}A_{2h_0}^T \\
                                  & =A_{2h_0}\left(W_{L_{h_0}L_{h_0+1}}W_{L_{h_0}L_{h_0+1}}^T\right)^{n(h+2)}A_{2h_0}^T \\
                                  & =A_{2h_0}R^{n(h+2)}A_{2h_0}^T.
\end{array}
\end{displaymath}
So, the conclusion holds when $h+2$.
\end{proof}

Below, we only discuss case 1 in lemma \ref{compute_M_SI}. Case 2 is similar. Obviously,
\begin{displaymath}
\mathcal{M}_{\mathcal{D}_G}=\sum_{h=2}^{2h_0-2}A_hA_h^T+A_{2h_0}\left(\sum_{j=0}^{+\infty}R_1^j\right)A_{2h_0}^T.
\end{displaymath}
The matrix power series $\sum_{j=0}^{+\infty}R_1^j$ may be divergent. In addition, different meta structures in the \textbf{SMS} should also have different weights.
As a result, the normalized version of $\mathcal{M}_{\mathcal{D}}$ is equal to
\begin{displaymath}
\begin{array}{ll}
\bar{\mathcal{M}}_{\mathcal{D}_G}= & \displaystyle\sum_{h=2}^{2h_0-2}w_{\frac{h}{2}-1}\bar{A}_h\bar{A}_h^T \\
                                 & +w_{h_0-1}(1-\lambda)\bar{A}_{2h_0}\left(\sum_{j=0}^{+\infty}\lambda\bar{R}^j\right)\bar{A}_{2h_0}^T,
\end{array}
\end{displaymath}
where $\bar{W}_{L_iL_{i+1}}$ and $\bar{R}$ are respectively the normalized versions of $W_{L_iL_{i+1}}$ and $R$, and $\bar{A}_h=\prod_{i=0}^{\frac{h}{2}-1}\bar{W}_{L_iL_{i+1}}$.
$\lambda\in\left(0,1\right)$ is called \textbf{decaying factor}.
$w_h\in [0,1], h=0,1,\cdots,h_0$, satisfying $\sum_{h=0}^{h_0}w_h=1$, denote the weights of different meta structures.
Obviously, the spectral radius $\rho(\lambda\bar{R})$ of $\lambda\bar{R}$ is less than 1 because $\bar{R}$ is a row random matrix and $0<\lambda<1$.
Let $\mathbf{I}$ denote the identity matrix with the same size as $\bar{R}$.
As a result, we have
\begin{equation}\label{normalized_commuting_matrix_dms}
\begin{array}{lll}
\mathcal{\bar{M}}_{\mathcal{D}_G}= & \displaystyle\sum_{h=2}^{2h_0-2}w_{\frac{h}{2}-1}\bar{A}_h\bar{A}_h^T \\
                                    & +w_{h_0-1}(1-\lambda)\bar{A}_{2h_0}\left(\mathbf{I}-\lambda\bar{R}\right)^{-1}\bar{A}_{2h_0}^T.
\end{array}
\end{equation}
The \textbf{Stratified Meta Structure based Similarity}, $SMSS$, of the source object $o_s$ and the target object $o_t$ is defined as
\begin{equation}\label{dms_sim}
SMSS(o_s,o_t)=\frac{2\times\mathcal{\bar{M}}_{\mathcal{D}_G}(o_s,o_t)}{\mathcal{\bar{M}}_{\mathcal{D}_G}(o_s,o_s)+\mathcal{\bar{M}}_{\mathcal{D}_G}(o_t,o_t)}.
\end{equation}

\begin{note}\label{approximation}
Using note \ref{locality} result in that $C_{L_{h}}\neq C_{L_{h+2}}$ for $h=h_0,h_0+2,\cdots,+\infty$.
As a result, the matrix $R$ defined previously is not square.
To address this issue, the elements in $C_{L_{h+2}}-C_{L_{h}}$ are removed from $C_{L_{h+2}}$. This leads to losing some semantics.
\end{note}

Now, we take the \textbf{HIN} shown in Fig. \ref{ExampleHins} as an example to illustrate note \ref{approximation}.
The object A:Yizhou Sun is selected as the source one. As shown in the first box ($C_{L_0}$) of the left-hand side of Fig. \ref{exampleCommutingMatrix},
A:Yizhou Sun marked as red color is kept, and the others are removed. In the second and third boxes ($C_{L_1}$ and $C_{L_2}$),
the elements marked as red color are kept and the others are removed.
In the fourth box ($C_{L_3}$), P:GenClus in addition to the red elements in $C_{L_1}$ is also kept because it is adjacent to (V:VLDB,T:HIN). Obviously, $C_{L_1}\neq C_{L_3}$.
In the fifth box ($C_{L_4}$), the elements marked green color in addition to the red ones in $C_{L_2}$ should also be kept because they are adjacent to P:GenClus.
Obviously, $C_{L_2}\neq C_{L_4}$. According to the approximation strategy, they are removed from $C_{L_4}$. That means some semantics are lost.

According to notes \ref{locality} and \ref{approximation} some elements in $C_{L_{i}}$ are removed.
$\bar{A}_h, h=2,4,\cdots,2h_0$ and $\bar{R}$ in formula \ref{normalized_commuting_matrix_dms} should be adjusted accordingly.
$W_{L_iL_{i+1}}$ is still used to denote the relation matrix from the renewed $C_{L_i}$ to the renewed $C_{L_{i+1}}$.
In $\mathcal{M}_{\mathcal{S}_h}$, $\bar{A}_h$ and $\bar{A}_h^T$ essentially represent the relation matrices respectively on the left and right side of the symmetry axis of $\mathcal{S}_h$.
When using notes \ref{locality} and \ref{approximation}, we must explicitly distinguish them.
Before proceeding, let $C_{L'_0}$ denote the set of objects belonging to the target object type, and $W_{L_{h-1}L'_0}$ denote the relation matrix from the updated $C_{L_{h-1}}$ to $C_{L'_0}$.
The left-hand relation matrix can be denoted as $\bar{A}_h^l=\prod_{i=0}^{\frac{h}{2}-1}W_{L_iL_{i+1}}$,
and the right-hand relation matrix can be denoted as $\bar{A}_h^r=\left(\prod_{i=\frac{h}{2}}^{h-2}W_{L_iL_{i+1}}\right)W_{L_{h-1}L'_0}$.
As a result,
\begin{equation}\label{modified_normalized_commuting_matrix_dms}
\begin{array}{lll}
\mathcal{\bar{M}}_{\mathcal{D}}^{new} & = & \sum_{h=2}^{2h_0-2}w_{\frac{h}{2}-1}\bar{A}_h^l\bar{A}_h^r \\
 &  & +w_{h_0-1}(1-\lambda)\bar{A}_{2h_0}^l\left(\mathbf{I}-\lambda\bar{R}\right)^{-1}\bar{A}_{2h_0}^r,
\end{array}
\end{equation}
where
$\bar{R}=W_{L_{h_0}L_{h_0+1}}W_{L_{h_0}L_{h_0+1}}^T$ still denotes the relation matrix of the recurrent substructure.
$SMSS$ is still defined via formula \ref{dms_sim} using $\mathcal{\bar{M}}_{\mathcal{D}}^{new}$.

\begin{note}\label{substitution}
In practice, it is very time-consuming to compute $(\mathbf{I}-\lambda\bar{R})^{-1}$ in formula \ref{modified_normalized_commuting_matrix_dms}.
Note that $\bar{A}_{2h_0}^l$ is a row vector according to the locality strategy.
Therefore, computing $\bar{A}_{2h_0}^l\left(\mathbf{I}-\lambda\bar{R}\right)^{-1}$ is equivalent to solving the linear equations $(\mathbf{I}-\lambda\bar{R})^TX=\bar{A}_{2h_0}^{lT}$.
We apply Lower-Upper (\textbf{LU}) decomposition to $\mathbf{I}-\lambda\bar{R}$ and then use the \textbf{LU} factors to obtain $X$ \cite{numerical_analysis}.
\end{note}

%%%%%%%%%%%%%%%%%%%%%%%%%%%%%%%%%%%%%%%%%%%%%%%%%%%%%%%%%%%%%%%%%%%%%%%%%%%%%%%%%%%%%%%%%%%%%%%%%%%%%%%%%%%%%%%%%%%%%%%%%
\subsection{Algorithm Description}\label{subsec:algorithmdescription}
%%%%%%%%%%%%%%%%%%%%%%%%%%%%%%%%%%%%%%%%%%%%%%%%%%%%%%%%%%%%%%%%%%%%%%%%%%%%%%%%%%%%%%%%%%%%%%%%%%%%%%%%%%%%%%%%%%%%%%%%%

Now, we describe the algorithm for computing $SMSS$, see algorithm \ref{DMSSALG}. As shown in lines 2-5, it takes at most $O_1=O(h_0\sum_{h=1}^{h_0-1}|C_{L_{h-1}}||C_{L_h}||C_{L_{h+1}}|+|C_{L_{h_0}}|^2|C_{L_{h_0+1}}|)$
to construct $\bar{A}_h^l$ and $\bar{A}_h^r$ for $h=2,4,\cdots,2h_0$ and compute $\mathcal{\bar{M}}_{\mathcal{D}}^{new}$.
According to note \ref{locality}, we obtain a vector whose entries represent the similarities between the source object and the others.
As shown in lines 6-9, it takes at most $O_2=|\phi(o_s)|^2+|\phi(o_s)||\phi(o_t)|$ to compute $SMSS(o_s,o_t)$. As a result, the worst-case time complexity of algorithm \ref{DMSSALG} is equal to
$O_1+O_2$.

\begin{algorithm}\footnotesize
\caption{Computing $SMSS$}
\begin{algorithmic}[1]
\label{DMSSALG}
\renewcommand{\algorithmicrequire}{\textsc{Input:}}
\renewcommand{\algorithmicensure}{\textsc{Output:}}
\REQUIRE{HIN $G$, source object $o_s$, decaying parameter $\lambda$, weights $w_h, 0,1,\cdots,h_0-1$.}
\ENSURE{Similarity vector $Sim$}
%\vspace{0.6cm}
\STATE Compute $h_0$
       \FOR{$h=2,4,\cdots,2h_0$}
\STATE \quad Compute $\bar{A}_h^l$ and $\bar{A}_h^r$
       \ENDFOR
\STATE Compute $\mathcal{\bar{M}}_{\mathcal{D}}^{new}$ using formula \ref{modified_normalized_commuting_matrix_dms}
       \FOR{$o_t\in\phi(o_s)$}
\STATE Compute $SMSS(o_s,o_t)$ using formula \ref{dms_sim}
\STATE Append $SMSS(o_s,o_t)$ to vector $Sim$
       \ENDFOR
       \RETURN $Sim$
\end{algorithmic}
\end{algorithm}

Take the toy example shown in Fig. \ref{ExampleHins} as an example, its \textbf{SMS} is shown in Fig. \ref{ExampleDeepMetaStructure}(a).
$A$ is the source object type. Obviously, $h_0=2$, and $L_0=\{A\_0\}$, $L_1=\{P\_1\}$, $L_2=\{V\_2, T\_2\}$.
Therefore, $\mathcal{M}_{\mathcal{S}_2}=A_2A_2^T$, $\mathcal{M}_{\mathcal{S}_4}=A_4A_4^T$,
and $\mathcal{M}_{\mathcal{S}_6}=A_4RA_4^T$ and so on. Let $\lambda=0.999$ $w_0=0.3, w_1=0.7$.
$\mathcal{\bar{M}}_{\mathcal{D}}$ can be easily computed according to formula \ref{normalized_commuting_matrix_dms}, i.e.
\begin{displaymath}
\mathcal{\bar{M}}_{\mathcal{D}}=
\begin{bmatrix}
0.281 & 0.141 & 0.0 & 0.0 & 0.0 \\
0.141 & 0.125 & 0.073 & 0.073 & 0.110 \\
0.0 & 0.073 & 0.146 & 0.146 & 0.145 \\
0.0 & 0.073 & 0.146 & 0.146 & 0.145 \\
0.0 & 0.110 & 0.146 & 0.146 & 0220
\end{bmatrix}
\end{displaymath}
As a result,
\begin{displaymath}
SMSS=
\begin{bmatrix}
1.0 & 0.347 & 0.0 & 0.0 & 0.0 \\
0.347 & 1.0 & 0.269 & 0.269 & 0.319 \\
0.0 & 0.269 & 1.0 & 1.0 & 0.396 \\
0.0 & 0.269 & 1.0 & 1.0 & 0.396 \\
0.0 & 0.319 & 0.396 & 0.396 & 1.0
\end{bmatrix}
.
\end{displaymath}
In the similarity matrix $SMSS$, the rows (or columns) respectively represent A:Chuan Shi, A:Philip S. Yu, A:Yizhou Sun, A:Jiawei Han, A:Xifeng Yan.
For the deep meta structure in Fig. \ref{ExampleDeepMetaStructure}(b), $Gene$ is selected as the source object type.
We only need to note that $L_1=\{T\_1, GO\_1, CC\_1\}$, $L_1'=\{CC\_1\}$ because the degrees of $T$ and $GO$ in the network schema are equal to 1.

%%%%%%%%%%%%%%%%%%%%%%%%%%%%%%%%%%%%%%%%%%%%%%%%%%%%%%%%%%%%%%%%%%%%%%%%%%%%%%%%%%%%%%%%%%%%%%%%%%%%%%%%%%%%%%%%%%%%%%%%%
\section{Experimental Evaluations}\label{sec:expriment}
%\IEEEraisesectionheading{\section{Experimental Evaluations}\label{sec:expriment}}
%%%%%%%%%%%%%%%%%%%%%%%%%%%%%%%%%%%%%%%%%%%%%%%%%%%%%%%%%%%%%%%%%%%%%%%%%%%%%%%%%%%%%%%%%%%%%%%%%%%%%%%%%%%%%%%%%%%%%%%%%

In this section, we compare the proposed metric $SMSS$ with the state-of-the-art metrics in terms of clustering task and ranking task on two real datasets.
The configuration of my PC is Intel(R) Core(TM) i5-4570 CPU @ 3.20GHz and RAM 12GB.
The evaluation criterion for ranking is Normalized Discounted Cumulative Gain ($nDCG\in\left[0,1\right]$, the better, the larger) \cite{nDCG:2013},
for clustering is Normalized Mutual Information ($NMI\in\left[0,1\right]$, the better, the larger) \cite{SHYYW:2011}.

%%%%%%%%%%%%%%%%%%%%%%%%%%%%%%%%%%%%%%%%%%%%%%%%%%%%%%%%%%%%%%%%%%%%%%%%%%%%%%%%%%%%%%%%%%%%%%%%%%%%%%%%%%%%%%%%%%%%%%%%%
\subsection{Datasets}\label{subsec:dataset}
%\IEEEraisesectionheading{\section{Experimental Evaluations}\label{sec:expriment}}
%%%%%%%%%%%%%%%%%%%%%%%%%%%%%%%%%%%%%%%%%%%%%%%%%%%%%%%%%%%%%%%%%%%%%%%%%%%%%%%%%%%%%%%%%%%%%%%%%%%%%%%%%%%%%%%%%%%%%%%%%

Two real datasets, called \textbf{DBLPr} and \textbf{BioIN}, are used here.
The first is extracted from \textbf{DBLP}\footnote{http://dblp.uni-trier.de/db/}, and another is extracted from \textbf{Chem2Bio2RDF} \cite{CDJWZ:2010,FDSCSB:2016}.
\textbf{DBLPr} includes 30 venues coming from six areas: database, data mining, information retrieval, information system, web mining and web information management,
and 26549 papers, 26081 authors, 16798 terms. Its network schema is shown in Fig. \ref{exampleNetworkSchema}(a).
\textbf{BioIN} includes 2018 genes, 300 tissues, 4331 gene ontology instances, 224 substructures, 712 side effects and 18097 chemical compounds.
Its network schema is shown in Fig. \ref{exampleNetworkSchema}(b).
Note in particular that genes assigned to multiple clusters are not considered here.
That means each gene in BioIN is assigned to a single cluster.
The \textbf{SMS} for \textbf{DBLPr} and \textbf{BioIN} are respectively shown in Fig. \ref{ExampleDeepMetaStructure}(a,b).

%%%%%%%%%%%%%%%%%%%%%%%%%%%%%%%%%%%%%%%%%%%%%%%%%%%%%%%%%%%%%%%%%%%%%%%%%%%%%%%%%%%%%%%%%%%%%%%%%%%%%%%%%%%%%%%%%%%%%%%%%
\subsection{Baselines}\label{subsec:baseline}
%\IEEEraisesectionheading{\section{Experimental Evaluations}\label{sec:expriment}}
%%%%%%%%%%%%%%%%%%%%%%%%%%%%%%%%%%%%%%%%%%%%%%%%%%%%%%%%%%%%%%%%%%%%%%%%%%%%%%%%%%%%%%%%%%%%%%%%%%%%%%%%%%%%%%%%%%%%%%%%%

In this paper, $SMSS$ is compared with three state-of-the-art similarity metrics: $BSCSE$ \cite{HZCSML:2016}, $BPCRW$ \cite{LC:2010b,LC:2010a}, $PathSim$ \cite{SHYYW:2011}.
Let $\mathcal{P}$ and $\mathcal{S}$ respectively denote a meta path and a meta structure. For a given source-target object pair $(o_s,o_t)$, they are defined as follows.
\begin{enumerate}
  \item $BSCSE(g,i|\mathcal{S},o_t)=\frac{\sum_{g'\in\sigma(g,i|\mathcal{S},G)}BSCSE(g',i+1|\mathcal{S},o_t)}{|\sigma(g,i|\mathcal{S},G)|^{\lambda}}$;
  \item $BPCRW(o,o_t|\mathcal{P})=\frac{\sum_{o'\in N_{\mathcal{P}}(o)}BPCRW(o',o_t|\mathcal{P})}{|N_{\mathcal{P}}(o)|^{\lambda}}$;
  \item $PathSim(o_s,o_t|\mathcal{P})=\frac{2\times\mathcal{M}_{\mathcal{P}}(o_s,o_t)}{\mathcal{M}_{\mathcal{P}}(o_s,o_s)+\mathcal{M}_{\mathcal{P}}(o_t,o_t)}$.
\end{enumerate}
In these definitions, $\lambda$ is a biased parameter.
For $BSCSE$, $\sigma(g,i|\mathcal{S},G)$ denotes the ($i+1$)-th layer's instances expanded from $g\in\mathcal{S}[1:i]$ on $G$ \cite{HZCSML:2016}.
For $BPCRW$, $N_{\mathcal{P}}(o)$ denotes the neighbors of $o$ along meta path $\mathcal{P}$ \cite{LC:2010b,LC:2010a}.
For $PathSim$, $\mathcal{M}_{\mathcal{P}}$ denotes the commuting matrix of the meta path $\mathcal{P}$ \cite{SHYYW:2011}.

%%%%%%%%%%%%%%%%%%%%%%%%%%%%%%%%%%%%%%%%%%%%%%%%%%%%%%%%%%%%%%%%%%%%%%%%%%%%%%%%%%%%%%%%%%%%%%%%%%%%%%%%%%%%%%%%%%%%%%%%%
\subsection{Parameter Setting}\label{subsec:parameter_setting}
%\IEEEraisesectionheading{\section{Experimental Evaluations}\label{sec:expriment}}
%%%%%%%%%%%%%%%%%%%%%%%%%%%%%%%%%%%%%%%%%%%%%%%%%%%%%%%%%%%%%%%%%%%%%%%%%%%%%%%%%%%%%%%%%%%%%%%%%%%%%%%%%%%%%%%%%%%%%%%%%

$SMSS$, $BSCSE$ and $BPCRW$ involve some input parameters.
For a specific task, we can first extract a subgraph from the complete \textbf{HIN} as a validation set,
and then select the optimal parameters for the task on the subgraph.
In this paper, we investigate the ranking quality and clustering quality under different input parameters of $SMSS$
and under ``optimal parameters'' of $BSCSE$ and $BPCRW$.
Specifically, $\lambda$ for $BSCSE$ and $BPCRW$ is respectively set to $0.1,0.3,0.5,0.7,0.9$.
The ``optimal $\lambda$''  means the one maximizing $nDCG$ or $NMI$ under these settings.
For $SMSS$, $\lambda$ is respectively set to $0.1,0.3,0.5,0.7,0.9$.
And $w_0,w_1$ for each $\lambda$ are sampled from the Beta distribution respectively with hyper-parameters $(1,9)$, $(2,8)$, $(3,7)$, $(4,6)$, $(5,5)$, $(6,4)$, $(7,3)$, $(8,2)$, $(9,1)$.
The optimal value ($NMI$ or $nDCG$) within these samples ($w_0$ and $w_1$) is selected as the one corresponding to this $\lambda$.
``optimal $(\lambda,w_0,w_1)$'' for $SMSS$ means the ones maximizing $nDCG$ or $NMI$ under all possible settings of $\lambda,w_0,w_1$.

%%%%%%%%%%%%%%%%%%%%%%%%%%%%%%%%%%%%%%%%%%%%%%%%%%%%%%%%%%%%%%%%%%%%%%%%%%%%%%%%%%%%%%%%%%%%%%%%%%%%%%%%%%%%%%%%%%%%%%%%%
\subsection{Clustering Quality}\label{subsec:clustering}
%\IEEEraisesectionheading{\section{Experimental Evaluations}\label{sec:expriment}}
%%%%%%%%%%%%%%%%%%%%%%%%%%%%%%%%%%%%%%%%%%%%%%%%%%%%%%%%%%%%%%%%%%%%%%%%%%%%%%%%%%%%%%%%%%%%%%%%%%%%%%%%%%%%%%%%%%%%%%%%%

Now, we compare $SMSS$ with the baselines in terms of clustering quality ($NMI$ \cite{SHYYW:2011}, the higher, the better) on \textbf{DBLPr} and \textbf{BioIN}.
First, we compute the similarities between two objects respectively using these metrics.
That means we obtain a feature vector for each object. Then, $k$-means method is used to complete the clustering task using these feature vectors.
For \textbf{DBLPr}, $Venue$ is selected as the source and target object type. Its benchmark is given according to the scope of the venues.
For \textbf{BioIN}, $Gene$ is set to the source and target object type. Its benchmark is extracted from the one used in paper \cite{benchmark}.
$k$ is set to the number of clusters in the benchmark.

%%%%%%%%%%%%%%%%%%%%%%%%%%%%%%%%%%%%%%%%%%%%%%%%%%%%%%%%%%%%%%%%%%%%%%%%%%%%%%%%%%%%%%%%%%%%%%%%%%%%%%%%%%%%%%%%%%%%%%%%%
\subsubsection{On BioIN}\label{subsubsec:clustering_slap}
%\IEEEraisesectionheading{\section{Experimental Evaluations}\label{sec:expriment}}
%%%%%%%%%%%%%%%%%%%%%%%%%%%%%%%%%%%%%%%%%%%%%%%%%%%%%%%%%%%%%%%%%%%%%%%%%%%%%%%%%%%%%%%%%%%%%%%%%%%%%%%%%%%%%%%%%%%%%%%%%

\begin{table}\scriptsize
  \centering
  \caption{Optimal $NMI$ on BioIN.}
  \begin{tabular}{c|c|c}
    \hline
     \textbf{Metric} & \textbf{Schematic Structure} & $\mathbf{NMI}$ \\
    \hline
     $SMSS$ (optimal $\lambda,w_0,w_1$) & - & $\mathbf{0.79514}$ \\
    \hline
    \multirow{2}{*}{$BSCSE$ (optimal $\lambda$)} & (G,(GO,T),G) & 0.74760 \\ \cline{2-3}
                            & (G,CC,(Si,Sub),CC,G) & 0.32577 \\ \cline{2-3}
    \hline
    \multirow{4}{*}{$BPCRW$ (optimal $\lambda$)}  & (G,T,G) & 0.36543 \\ \cline{2-3}
                              & (G,GO,G) & 0.78456 \\ \cline{2-3}
                              & (G,CC,Si,CC,G) & 0.22343 \\ \cline{2-3}
                              & (G,CC,Sub,CC,G) & 0.31246 \\ \cline{2-3}
    \hline
    \multirow{4}{*}{$PathSim$}  & (G,T,G) & 0.35910 \\ \cline{2-3}
                                & (G,GO,G) & 0.78847 \\ \cline{2-3}
                                & (G,CC,Si,CC,G) & 0.21397 \\ \cline{2-3}
                                & (G,CC,Sub,CC,G) & 0.30809 \\ \cline{2-3}
    \hline
  \end{tabular}
  \label{SlapClustering}
\end{table}

Table \ref{SlapClustering} shows the optimal $NMI$ values for $SMSS$ and the baselines with different schematic structures on \textbf{BioIN}.
Note in particular that $SMSS$ does not depend on any meta paths and meta structures. Therefore, its cell corresponding to schematic structure is filled with ``-''.
The optimal $\lambda$ for $BSCSE$ and $BPCRW$ and optimal $(\lambda,w_0,w_1)$ for $SMSS$ are settled using the method in section \ref{subsec:parameter_setting}.
Obviously, the optimal $NMI$ for $SMSS$ is larger than those for the baselines with different schematic structure, especially when
choosing the meta structure $(G,CC,(Si,Sub),CC,G)$ and the meta paths $(G,T,G)$, $(G,CC,Si,CC,G)$ and $(G,CC,Sub,CC,G)$.

%According to table \ref{slap_clustering}, we also discover that meta structures and meta paths have significant impacts on the optimal $NMI$ for $BSCSE$, $BPCRW$ and $PathSim$.
%That is to say, $BSCSE$ is sensitive to meta structures, and $BPCRW$ and $PathSim$ are sensitive to meta paths on BioIN.
%For example, the optimal $NMI$ for $BSCSE$ is equal to 0.74760 under the meta structure (G,(GO,T),G), but only equal to 0.32577 under the meta structure (G,CC,(Si,Sub),CC,G).
%The same phenomena occur to the optimal $NMI$ for $BPCRW$ and $PathSim$ as well.
%In section \ref{subsec:insensitivity}, we will examine whether the meta paths or meta structures still have significant impacts on $NMI$ for $BSCSE$, $BPCRW$ and $PathSim$ under different biased parameters.

\begin{figure}[htb]
  \centering
  \includegraphics[width=0.6\textwidth]{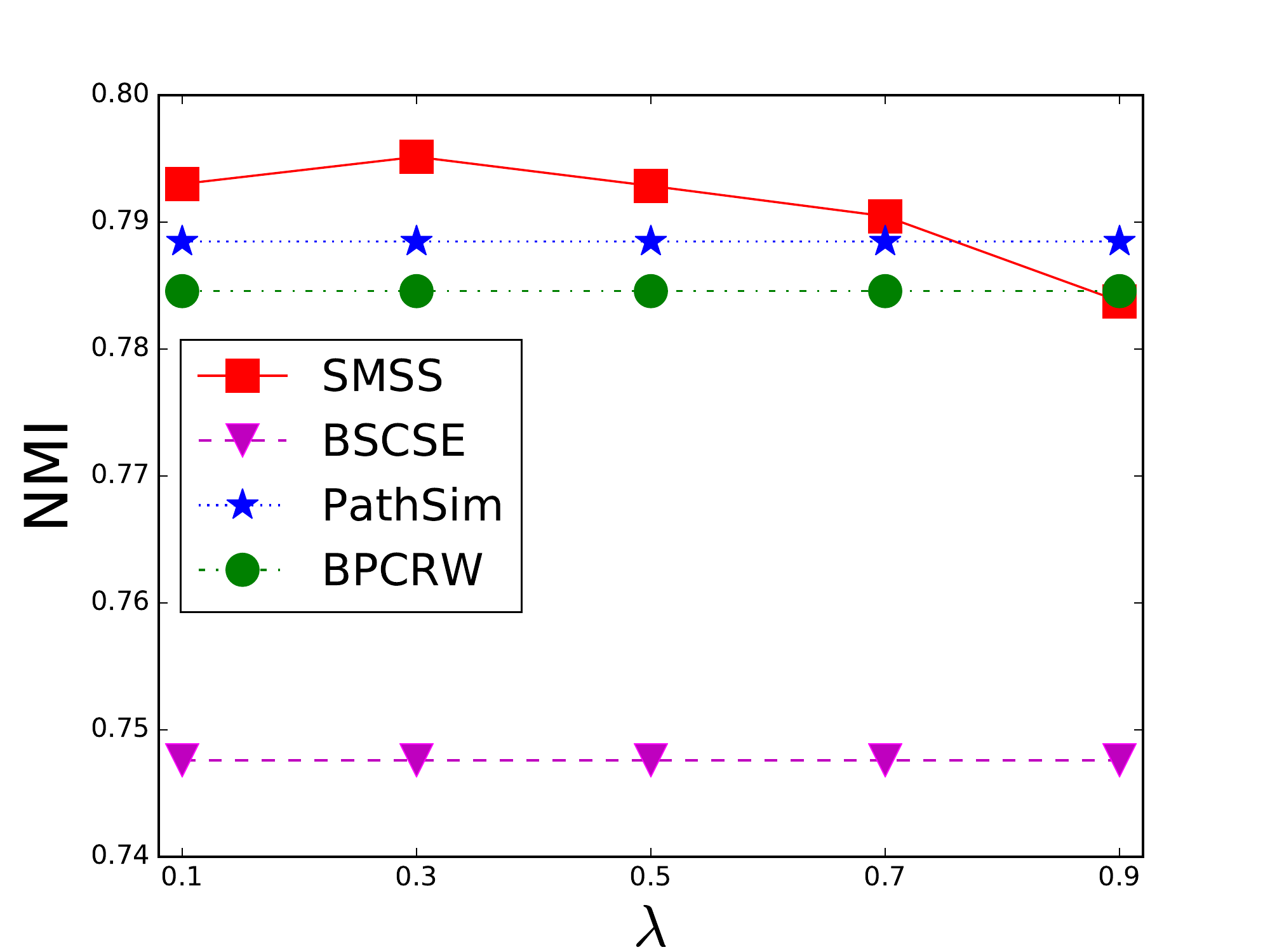}
  \caption{Comparison of $NMI$ for $SMSS$ under different $\lambda$ with optimal $NMI$ for $PathSim$, $BSCSE$ and $BPCRW$ on BioIN.}
  \label{SlapClusterLambda}
\end{figure}

Now, we examine whether $NMI$ values for $SMSS$ under different decaying parameter $\lambda$ are still larger than the optimal $NMI$ for $PathSim$, $BSCSE$ and $BPCRW$ on \textbf{BioIN}.
According to table \ref{SlapClustering}, $(G,GO,G)$ is selected as the meta paths for $PathSim$ and $BPCRW$, and $(G,(GO,T),G)$ is selected as the meta structure for $BSCSE$.
The decaying parameter $\lambda$ of $SMSS$ is respectively set to 0.1, 0.3, 0.5, 0.7, 0.9, and the corresponding values for each $\lambda$ can be settled using the method in section \ref{subsec:parameter_setting}.
Fig. \ref{SlapClusterLambda} presents the comparisons of $NMI$ for $SMSS$ under different $\lambda$ with optimal $NMI$ for $PathSim$, $BSCSE$ and $BPCRW$.
From this figure, we know that $NMI$ for $SMSS$ is always larger than that for $PathSim$, $BSCSE$ and $BPCRW$ when $\lambda=0.1,0.3,0.5,0.7$.
When $\lambda=0.9$, $NMI$ for $SMSS$ is a little larger than that for $BPCRW$, but a little less than that for $PathSim$.
According to table \ref{SlapClustering}, we know $(G,GO,G)$ plays a more important role than other meta paths (e.g. $(G,CC,Si,CC,G)$ and $(G,CC,Sub,CC,G)$) in \textbf{BioIN} in terms of the clustering task.
According to equation \ref{normalized_commuting_matrix_dms} and the \textbf{SMS} shown in Fig. \ref{ExampleDeepMetaStructure}(b),
the weight of $(G,GO,G)$ is smaller than the weights of the other schematic structures as $\lambda$ becomes large. Therefore, $NMI$ for $SMSS$ drops a little at this time.

%%%%%%%%%%%%%%%%%%%%%%%%%%%%%%%%%%%%%%%%%%%%%%%%%%%%%%%%%%%%%%%%%%%%%%%%%%%%%%%%%%%%%%%%%%%%%%%%%%%%%%%%%%%%%%%%%%%%%%%%%
\subsubsection{On DBLPr}\label{subsubsec:clustering_dblp}
%\IEEEraisesectionheading{\section{Experimental Evaluations}\label{sec:expriment}}
%%%%%%%%%%%%%%%%%%%%%%%%%%%%%%%%%%%%%%%%%%%%%%%%%%%%%%%%%%%%%%%%%%%%%%%%%%%%%%%%%%%%%%%%%%%%%%%%%%%%%%%%%%%%%%%%%%%%%%%%%

\begin{table}\scriptsize
  \centering
  \caption{Optimal $NMI$ values on DBLPr.}
  \begin{tabular}{c|c|c}
    \hline
     \textbf{Metric} & \textbf{Schematic Structure} & $\mathbf{NMI}$ \\
    \hline
     $SMSS$ (optimal $\lambda,w_0,w_1$) & - & $\mathbf{0.88449}$ \\
    \hline
    $BSCSE$ (optimal $\alpha$) & (V,P,(A,T),P,V) & 0.76927 \\
    \hline
    \multirow{2}{*}{$BPCRW$ (optimal $\alpha$)}  & (V,P,A,P,V) & 0.80348 \\ \cline{2-3}
                              & (V,P,T,P,V) & 0.78964 \\ \cline{2-3}
    \hline
    \multirow{2}{*}{$PathSim$}  & (V,P,A,P,V) & 0.65622 \\ \cline{2-3}
                                & (V,P,T,P,V) & 0.66509 \\ \cline{2-3}
    \hline
  \end{tabular}
  \label{DblpClustering}
\end{table}

Table \ref{DblpClustering} shows the optimal $NMI$ values for $SMSS$ and the baselines with different schematic structures on \textbf{DBLPr}.
The cell of $SMSS$ corresponding to schematic structure is also filled with ``-''. the optimal $\lambda$ for $BSCSE$ and $BPCRW$ and optimal $(\lambda,w_0,w_1)$ for $SMSS$
can also be settled using the method in subsection \ref{subsec:parameter_setting}.
Obviously, the optimal $NMI$ for $SMSS$ is larger than those for the baselines with different schematic structures.

%According to table \ref{dblp_clustering}, meta paths have no apparent impacts on the optimal $NMI$ for $BPCRW$ and $PathSim$.
%That means $BPCRW$ and $PathSim$ are not significantly sensitive to meta paths on \textbf{DBLPr}.
%In section \ref{subsec:insensitivity}, we will examine whether the meta paths still have no apparent impacts on $NMI$ for $BSCSE$, $BPCRW$ and $PathSim$ under different biased parameters.

\begin{figure}[htb]
  \centering
  \includegraphics[width=0.6\textwidth]{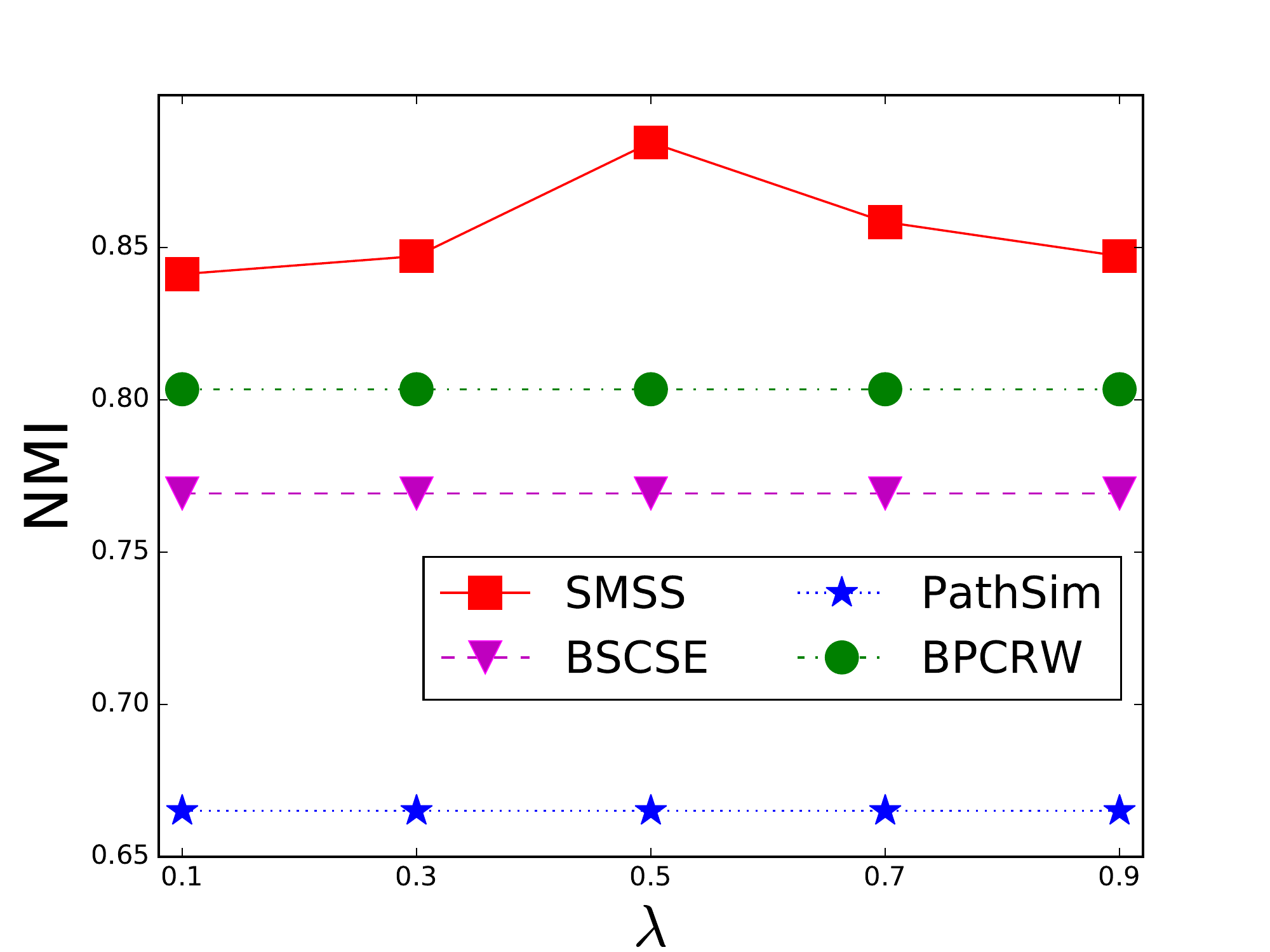}
  \caption{Comparison of $NMI$ for $SMSS$ under different $\lambda$ with optimal $NMI$ for $PathSim$, $BSCSE$ and $BPCRW$ on DBLPr.}
  \label{DblpClusterLambda}
\end{figure}

Now, we examine whether $NMI$ for $SMSS$ under different decaying parameter $\lambda$ is still larger than optimal $NMI$ for $PathSim$, $BSCSE$ and $BPCRW$ on \textbf{DBLPr}.
According to table \ref{DblpClustering}, the meta paths for $PathSim$ is $(V,P,T,P,V)$ and for $BPCRW$ is $(V,P,A,P,V)$, and the meta structure for $BSCSE$ is $(V,P,(A,T),P,V)$.
The decaying parameter $\lambda$ of $SMSS$ is respectively set to 0.1, 0.3, 0.5, 0.7, 0.9, and the corresponding values for each $\lambda$ can be settled using the method in section \ref{subsec:parameter_setting}.
Fig. \ref{DblpClusterLambda} presents the comparisons of $NMI$ for $SMSS$ under different $\lambda$ with the optimal $NMI$ for $PathSim$, $BSCSE$ and $BPCRW$.
According to this figure, we know that $NMI$ for $SMSS$ is much larger than those for $PathSim$, $BSCSE$ and $BPCRW$ whenever $\lambda=0.1,0.3,0.5,0.7,0.9$.
Integrating Fig. \ref{SlapClusterLambda} and Fig. \ref{DblpClusterLambda}, we know that $SMSS$ significantly outperforms the baselines
especially when choosing different meta paths or meta structures.

%%%%%%%%%%%%%%%%%%%%%%%%%%%%%%%%%%%%%%%%%%%%%%%%%%%%%%%%%%%%%%%%%%%%%%%%%%%%%%%%%%%%%%%%%%%%%%%%%%%%%%%%%%%%%%%%%%%%%%%%%
\subsection{Ranking Quality}\label{subsec:ranking}
%\IEEEraisesectionheading{\section{Experimental Evaluations}\label{sec:expriment}}
%%%%%%%%%%%%%%%%%%%%%%%%%%%%%%%%%%%%%%%%%%%%%%%%%%%%%%%%%%%%%%%%%%%%%%%%%%%%%%%%%%%%%%%%%%%%%%%%%%%%%%%%%%%%%%%%%%%%%%%%%

\begin{table*}
  \centering
  \caption{Optimal $nDCG$ values on \textbf{DBLPr}. (V,P,(A,T),P,V) denotes the meta structure shown in Fig. \ref{exampleMetaPath}(d).}
  \begin{tabular}{c|c|c|c|c}
    \hline
     \multicolumn{2}{c|}{Schematic Structure} & VLDB & ICDE & SIGMOD \\
    \hline
    \multirow{3}{*}{$(V,P,A,P,V)$} & $BPCRW$ & \multirow{2}{*}{0.96354} & \multirow{2}{*}{0.95975} & \multirow{2}{*}{0.96554} \\
      & (optimal $\alpha$) &  &  &  \\ \cline{2-5}
      & $PathSim$ & 0.97831 & 0.97231 & 0.98069 \\
    \hline
    \multirow{3}{*}{$(V,P,T,P,V)$} & $BPCRW$ & \multirow{2}{*}{0.91979} & \multirow{2}{*}{0.91523} & \multirow{2}{*}{0.93000} \\
      & (optimal $\alpha$) & & &  \\ \cline{2-5}
      & $PathSim$ & 0.94248 & 0.94871 & 0.94993 \\
    \hline
    \multirow{2}{*}{$(V,P(A,T),P,V)$} & $BSCSE$ & \multirow{2}{*}{0.96868} & \multirow{2}{*}{0.96419} & \multirow{2}{*}{0.97147} \\
      & (optimal $\alpha$) &  &  &  \\
    \hline
    \multirow{2}{*}{-} & $SMSS$ & \multirow{2}{*}{$\mathbf{0.98185}$} & \multirow{2}{*}{$\mathbf{0.97254}$} & \multirow{2}{*}{$\mathbf{0.98111}$} \\
      & (optimal $\lambda,w_0,w_1$) &  &  & \\
    \hline
  \end{tabular}
  \label{RankingEval}
\end{table*}

%\begin{table*}
%  \centering
%  \caption{Optimal $\mathbf{nDCG}$ values on DBLPr. (V,P,(A,T),P,V) denotes the meta structure shown in Fig. \ref{example_meta_path}(d).}
%  \begin{tabular}{c|c|c|c|c|c|c}
%    \hline
%    \multirow{3}{*}{Source} & \multicolumn{2}{c|}{(V,P,A,P,V)} & \multicolumn{2}{c|}{(V,P,T,P,V)} & (V,P,(A,T),P,V) & - \\ \cline{2-7}
%      & $BPCRW$ & \multirow{2}{*}{$PathSim$} & $BPCRW$ & \multirow{2}{*}{$PathSim$} & $BSCSE$ & $DMSS$ \\
%      & (optimal $\alpha$)  &  & (optimal $\alpha$) &  & (optimal $\alpha$) & (optimal $\lambda,w_0,w_1$) \\
%    \hline
%    VLDB & 0.96354 & 0.97831 & 0.91979 & 0.94248 &  0.96868 & $\mathbf{0.98185}$ \\
%    \hline
%    ICDE & 0.95975 & 0.97231 & 0.91523 & 0.94871 & 0.96419 & $\mathbf{0.97254}$ \\
%    \hline
%    SIGMOD & 0.96554 & 0.98069 & 0.93000 & 0.94993 & 0.97147 & $\mathbf{0.98111}$ \\
%    \hline
%    %SIGKDD & 0.93224 & 0.94490 & 0.85478 & 0.88815 & 0.94864 & $\mathbf{0.95128}$ \\
%%    \hline
%  \end{tabular}
%  \label{ranking}
%\end{table*}

Here, we compare $SMSS$ with the baselines in terms of ranking quality ($nDCG$ \cite{SHYYW:2011}, the higher, the better) on \textbf{DBLPr}.
$Venue$ is selected as the source object type, and ICDE, VLDB and SIGMOD are selected as source objects. We respectively compute the similarities from the source objects to other objects.
For each source, all the objects can be ranked 0 (unrelated), 1 (slightly related), 2 (fairly related), 3 (highly related) according to the relatedness to this source.
Thus, we can compute $nDCG$ respectively for each source object. The optimal $\lambda$ for $BSCSE$ and $BPCRW$ and optimal $(\lambda,w_0,w_1)$ for $SMSS$ are settled using the method in
subsection \ref{subsec:parameter_setting}.

Table \ref{RankingEval} shows the optimal $nDCG$ values for $SMSS$ and the baselines with different meta paths or meta structures.
For ICDE, VLDB, SIGMOD, the optimal $nDCG$ for $SMSS$ is always larger than those for the baselines with different meta paths or meta structures.

\begin{figure}[htb]
  \centering
  \subfigure[ICDE]{%
  \includegraphics[width=0.325\textwidth]{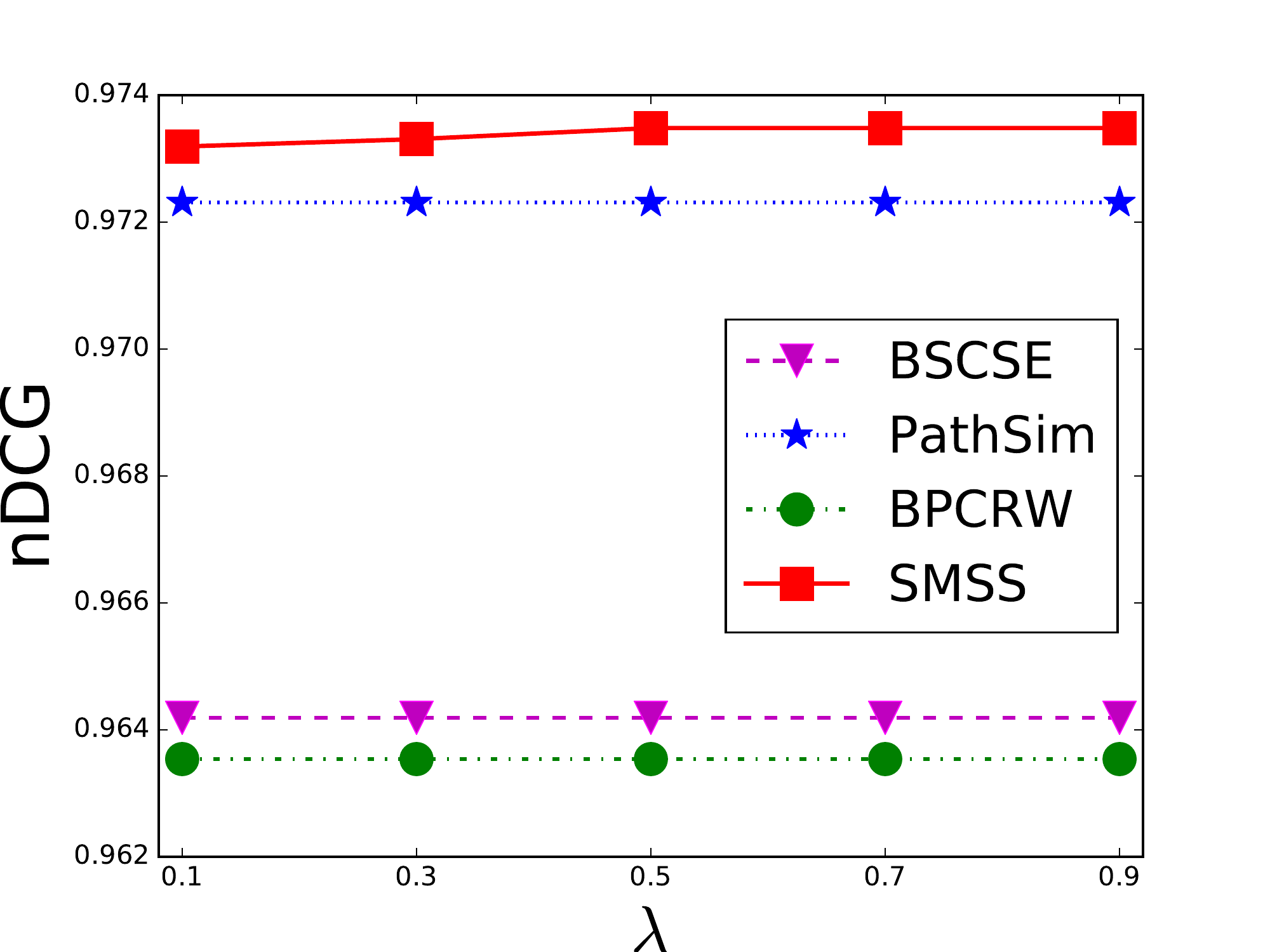}}
  \subfigure[VLDB]{%
  \includegraphics[width=0.325\textwidth]{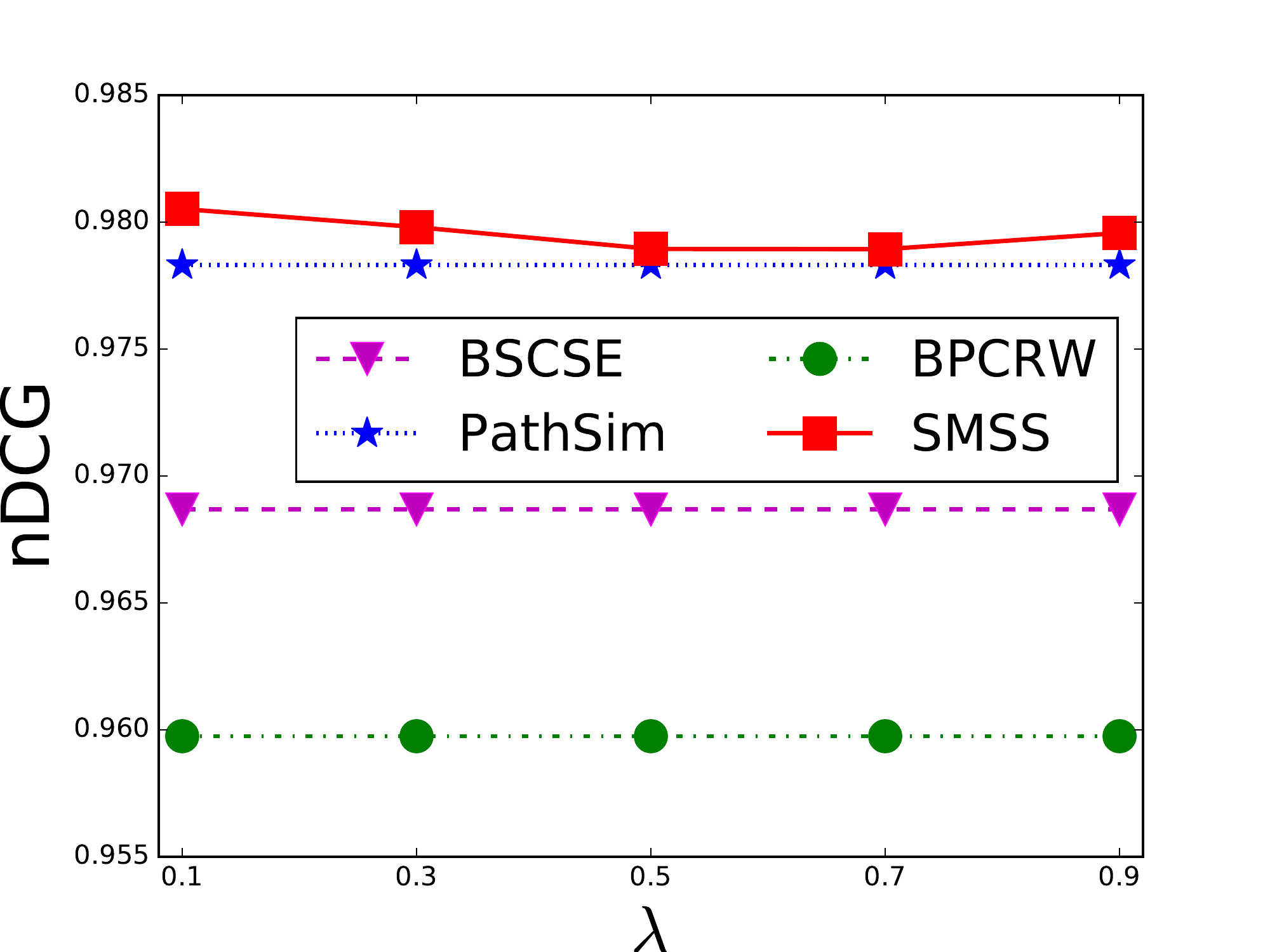}}
  \subfigure[SIGMOD]{%
  \includegraphics[width=0.325\textwidth]{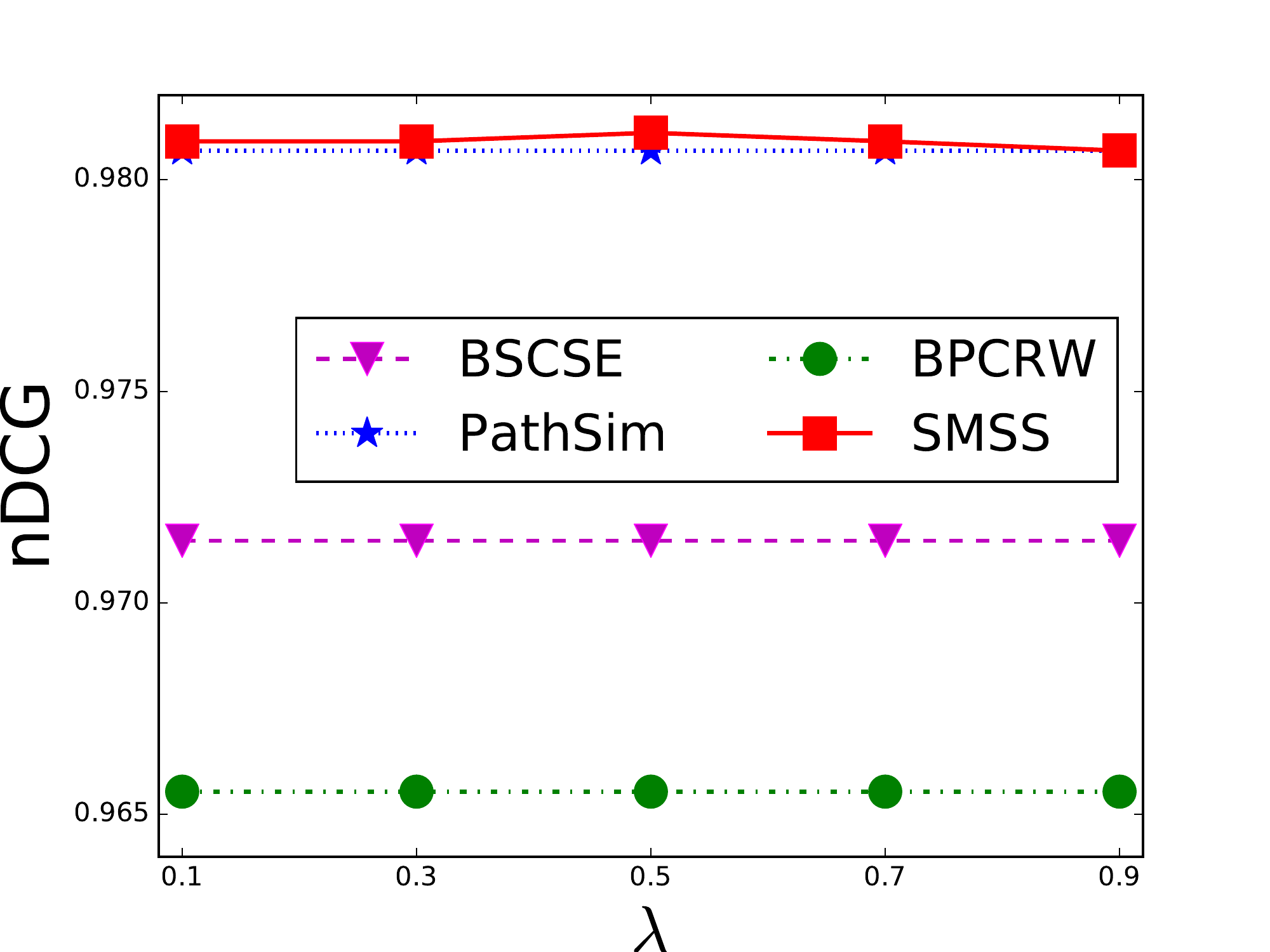}}
  \caption{Comparison of $nDCG$ for $SMSS$ under different $\lambda$ with optimal $nDCG$ for $BPCRW$, $PathSim$ and $BSCSE$.}
  \label{RankingLambda}
\end{figure}

Now, we examine whether $nDCG$ for $SMSS$ is still larger than the optimal $nDCG$ for the baselines under different $\lambda$ on \textbf{DBLPr}.
According to table \ref{RankingEval}, $(V,P,A,P,V)$ is selected as the meta path for $PathSim$ and $BPCRW$, and $(V,P,(A,T),P,V)$ as the meta structure for $BSCSE$.
The decaying parameter $\lambda$ of $SMSS$ is respectively set to 0.1, 0.3, 0.5, 0.7, 0.9, and the corresponding values for each $\lambda$ can be settled using the method in section \ref{subsec:parameter_setting}.
Fig. \ref{RankingLambda} shows the comparisons of $nDCG$ for $SMSS$ under different decaying parameters $\lambda$ with optimal $nDCG$ for $PathSim$, $BSCSE$ and $BPCRW$.
For ICDE and VLDB, their $nDCG$ values for $SMSS$ are much larger than those for $PathSim$, $BSCSE$ and $BPCRW$ when $\lambda=0.1,0.3,0.5,0.7,0.9$.
For SIGMOD, its $nDCG$ values for $SMSS$ are a little larger than those for $PathSim$, $BSCSE$ and $BPCRW$ when taking different $\lambda$.
In conclusion, the proposed metric $SMSS$ outperforms the baselines in terms of ranking on the whole.

%%%%%%%%%%%%%%%%%%%%%%%%%%%%%%%%%%%%%%%%%%%%%%%%%%%%%%%%%%%%%%%%%%%%%%%%%%%%%%%%%%%%%%%%%%%%%%%%%%%%%%%%%%%%%%%%%%%%%%%%%
\subsection{Time Efficiency}\label{subsec:timeefficiency}
%\IEEEraisesectionheading{\section{Experimental Evaluations}\label{sec:expriment}}
%%%%%%%%%%%%%%%%%%%%%%%%%%%%%%%%%%%%%%%%%%%%%%%%%%%%%%%%%%%%%%%%%%%%%%%%%%%%%%%%%%%%%%%%%%%%%%%%%%%%%%%%%%%%%%%%%%%%%%%%%

\begin{table}
  \centering
  \caption{Average Running Time (Sec) of computing similarity between a source and a target object.}
  \begin{tabular}{c|c|c}
    \hline
     \diagbox{\textbf{Metric}}{\textbf{DataSet}} & \textbf{DBLPr} & \textbf{BioIN} \\
     \hline
      $BPCRW$ & $\mathbf{0.27005}$ & 0.01634 \\
     \hline
      $PathSim$ & 5.39970 & 0.94207 \\
     \hline
      $SMSS$ & 68.29283 & 0.97953 \\
     \hline
      $BSCSE$ & 610.68539 & $\mathbf{0.00395}$ \\
    \hline
  \end{tabular}
  \label{RunningTime}
\end{table}

Here, we evaluate the time efficiency of $SMSS$, $BPCRW$, $PathSim$ and $BSCSE$ on \textbf{BioIN} and \textbf{DBLPr}.
Table \ref{RunningTime} shows the running time of computing the similarities between a source object and a target object using $SMSS$ and the baselines on \textbf{BioIN} and \textbf{DBLPr}.
According to this table, $SMSS$ has no advantages in terms of running time. This is consistent with our expectation, because computing $SMSS$ requires a lot of matrix operations.
As stated previously, $SMSS$ does not depend on any meta paths and meta structures.
This advantage is obtained at the sacrifice of computational efficiency.
In fact, we can employ Graphics Processing Unit (GPU) to accelerate the matrix operations.
However, we have no GPU. So we cannot implement our algorithm using GPU.

%%%%%%%%%%%%%%%%%%%%%%%%%%%%%%%%%%%%%%%%%%%%%%%%%%%%%%%%%%%%%%%%%%%%%%%%%%%%%%%%%%%%%%%%%%%%%%%%%%%%%%%%%%%%%%%%%%%%%%%%%
\section{Conclusion}\label{sec:conclusion}
%\IEEEraisesectionheading{\section{Conclusion}\label{sec:conclusion}}
%%%%%%%%%%%%%%%%%%%%%%%%%%%%%%%%%%%%%%%%%%%%%%%%%%%%%%%%%%%%%%%%%%%%%%%%%%%%%%%%%%%%%%%%%%%%%%%%%%%%%%%%%%%%%%%%%%%%%%%%%

In this paper, we propose a stratified meta structure based similarity $SMSS$ in \textbf{HINs}.
The stratified meta structure can be automatically constructed by repetitively traversing the network schema, and contains rich semantics.
That means users do not worry about selecting an inappropriate meta paths or meta structures.
To formalize the semantics in the \textbf{SMS}, we firstly use the commuting matrices to formalize the semantics in meta structures,
and then define the commuting matrix of the \textbf{SMS} by combining all the commuting matrices of the relevant meta structures or meta paths.
Experimental evaluations show that $SMSS$ on the whole outperforms the baselines in terms of clustering and ranking.

\section*{References}
\bibliography{mybibfile}

\end{document}